\newcommand{\name}{{Myst}\xspace}
\newcommand{\A}{\mathcal{A}}
\newcommand{\F}{\mathcal{F}}
\newcommand{\GG}{\mathcal{G}}
\newcommand{\Gr}{\mathbb{G}}
\newcommand{\Z}{\mathbb{Z}}
\begin{document}

\title{A Touch of Evil: High-Assurance Cryptographic\\Hardware from Untrusted Components}

\author{Vasilios Mavroudis}
\affiliation{\institution{University College London}}
\email{v.mavroudis@cs.ucl.ac.uk}

\author{Andrea Cerulli}
\affiliation{\institution{University College London}}
\email{andrea.cerulli.13@ucl.ac.uk}

\author{Petr Svenda}
\affiliation{\institution{Masaryk University}}
\email{svenda@fi.muni.cz}

\author{Dan Cvrcek}
\affiliation{\institution{EnigmaBridge}}
\email{dan@enigmabridge.com}

\author{Dusan Klinec}
\affiliation{\institution{EnigmaBridge}}
\email{dusan@enigmabridge.com}

\author{George Danezis}
\affiliation{\institution{University College London}}
\email{g.danezis@ucl.ac.uk}

\renewcommand{\shortauthors}{V. Mavroudis et al.}

\begin{abstract}
The semiconductor industry is fully globalized and integrated circuits (ICs) are commonly defined, designed and fabricated in different premises across the world. This reduces production costs, but also exposes ICs to supply chain attacks, where insiders introduce malicious circuitry into the final products.  Additionally, despite 
extensive post-fabrication testing, it is not uncommon for ICs with subtle fabrication errors to make it into 
production systems. While many systems may be able to tolerate a few byzantine components,
this is not the case for cryptographic hardware, storing and computing on confidential data.
For this reason, many error and backdoor detection techniques have been proposed over the years.
So far all attempts have been either quickly circumvented, or come 
with unrealistically high manufacturing costs and complexity.

This paper proposes~\name, a practical high-assurance architecture, that uses commercial off-the-shelf (COTS) hardware, and provides strong security guarantees, even in the presence of multiple malicious or faulty components. The key idea 
is to combine protective-redundancy with modern threshold cryptographic techniques to build a system tolerant to hardware trojans and errors.
To evaluate our design, we build a Hardware Security Module that provides the highest 
level of assurance possible with COTS components.
Specifically, we employ more than a hundred COTS secure cryptocoprocessors, verified to FIPS140-2 Level 4 tamper-resistance standards, and use them to realize high-confidentiality random number generation, key derivation, public key decryption and signing. 
Our experiments show a reasonable computational overhead (less than ${1\%}$ for both Decryption and Signing) and an exponential increase in backdoor-tolerance as more ICs are added.
\end{abstract}

\keywords{cryptographic hardware; hardware trojans; backdoor-tolerance; secure architecture}





\maketitle


\section{Introduction}\label{sec:intro}
Many critical systems with high security needs rely on secure cryptoprocessors to 
carry out sensitive security tasks (e.g., key generation and storage, legally binding digital signature, code signing)
and provide a protection layer against cyber-attacks and security breaches.
These systems are typically servers handling sensitive data, 
banking infrastructure, military equipment and space stations.  
In most cases, secure cryptoprocessors come embedded into Hardware Security Modules, Trusted Platform Modules and Cryptographic Accelerators, which are assumed to be both secure and reliable.
This entails that errors in any of the Integrated Circuits (ICs) would be devastating for the security 
of the final system. For this reason, the design and fabrication of the underlying ICs must abide f
to high-quality specifications and standards. These ensure that there are no intentional 
or unintentional errors in the circuits, but more importantly ensure the integrity of the hardware supply chain. ~\cite{DBLP:conf/nsdi/KingTCGJZ08}.

Unfortunately, vendors are not always able to oversee all parts of the supply chain~\cite{force2005high, miyamoto2017counterfeit}. The constant reduction in transistor size makes IC fabrication an expensive process, 
and IC designers often outsource the fabrication task to overseas foundries to reduce their costs
~\cite{DBLP:journals/computer/EdenfeldKRZ04, heck2011creating, yeh2012trends}.
This limits vendors to run only post-fabrication tests to uncover potential defects. Those tests are
very efficient against common defects, but subtle errors are hard to uncover. For instance, 
cryptoprocessors with defective RNG modules and hardware cipher implementations have made it into production 
in the past~\cite{fredriksson2016case, courtois2009dark}.

Additionally, parts of the IC's supply chain are left vulnerable to attacks from malicious insiders~\cite{tehranipoor2011introduction, DBLP:conf/dac/PotkonjakNNM09, united2005defense, DBLP:journals/jce/BeckerRPB14}
and have a higher probability of introducing unintentional errors in the final product.
In several documented real-world cases, contained errors, backdoors or trojan horses. For instance, recently an exploitable vulnerability was discovered on Intel processors that utilize Intel Active Management Technology (AMT)~\cite{www-intel}, while vulnerable ICs have been reported in military~\cite{DBLP:conf/ches/SkorobogatovW12, mitra2015trojan} applications, networking equipment~\cite{jin2010hardware, gallagher2014photos}, and various other application~\cite{skorobogatov2012hardware, adee2008hunt, markoff2009old, shrimpton2015provable}.
Furthermore, the academic community has designed various types of hardware trojans (HT), and backdoors that demonstrate the extent of the problem and its mitigation difficulty~\cite{DBLP:conf/dft/WangMKNB12,wang2014hardware,wang2011sequential,DBLP:conf/cases/KutznerPS13,DBLP:journals/pieee/BhuniaHBN14, pellegrini2010fault, becker2013stealthy, borghoff2012prince}.

Due to the severity of these threats, there is a large body of work on the mitigation of malicious circuitry.
Existing works have pursued two different directions: detection and prevention.
Detection techniques aim to determine whether any HTs exist in a given circuit~\cite{DBLP:journals/tvlsi/WeiP14, 5778966, soll2014based, agrawal2007trojan},
while prevention techniques either impede the introduction of HTs, or enhance the efficiency of HT detection~\cite{waksman2011silencing, 7835561, DBLP:conf/dac/WangCHR16, rajendran2013split, waksman2010tamper}. 
Unfortunately, both detection and prevention techniques are brittle, as new threats
are able to circumvent them quickly~\cite{DBLP:conf/dac/WangCHR16}. For instance, analog malicious hardware~\cite{yang2016a2} was able to evade known defenses, including split manufacturing,
which is considered one of the most promising and effective prevention approaches. Furthermore, most prevention techniques come with a high manufacturing cost for higher levels of security~\cite{bhunia2014hardware, DBLP:conf/dac/WangCHR16, 7835561}, which contradicts the motives of fabrication outsourcing.
To make matters worse, vendors that use commercial off-the-shelf (COTS) components are 
constrained to use only post-fabrication detection techniques, which further limits their mitigation toolchest.
All in all, backdoors being triggered by covert means and mitigation countermeasures are in an arms race that seems favorable to the attackers~\cite{wei2013undetectable, DBLP:conf/dac/WangCHR16}.


In this paper, we propose \name a new approach to the problem of building trustworthy
cryptographic hardware. Instead of attempting to detect or prevent hardware trojans and errors,
we proposed and provide evidence to support the hypothesis:\\

\noindent\textit{``We can build high-assurance cryptographic hardware 
from a set of untrusted components,
as long as at least one of them is not compromised, even if 
benign and malicious components are indistinguishable.''}\\

Our key insight is that by combining established privacy enhancing technologies (PETs),
with mature fault-tolerant system architectures, we can distribute trust between 
multiple components originating from non-crossing supply chains, thus reducing the likelihood of compromises.
To achieve this, we deploy distributed cryptographic schemes on top of an
N-variant system architecture, and build a trusted platform
that supports a wide-range of commonly used cryptographic operations
(e.g., random number and key generation, decryption, signing).
This design draws from protective-redundancy and component diversification~\cite{chen1978n}
and is built on the assumption that multiple processing units
and communication controllers may be compromised by the same adversary.
However, unlike N-variant systems, instead of  replicating the computations on all processing units, \name uses multi-party cryptographic schemes to distribute 
the secrets so that the components hold only shares of the secrets (and not the secrets themselves), at all times. As long as one of the components remains honest, the secret cannot be
reconstructed or leaked. Moreover, we can tolerate 
two or more non-colluding adversaries who have compromised $100\%$ of the components.

Our proposed architecture is of particular interest for two distinct categories of hardware vendors:
\begin{itemize}
\item Design houses that outsource the fabrication of their ICs.
\item COTS vendors that rely on commercial components to build their high-assurance hardware.
\end{itemize}

Understandably, design houses have much better control over the IC fabrication and the supply chain, and
this allows them to take full advantage of our architecture. 
In particular, they can combine existing detection~\cite{DBLP:conf/ches/ChakrabortyWPPB09,rajendran2017logic, DBLP:conf/host/BaoXS15} and prevention techniques~\cite{7835561, DBLP:conf/dac/WangCHR16, rajendran2013split, chakraborty2009security} with our proposed design, 
to reduce the likelihood of compromises for individual components.
On the other hand, COTS vendors have less control as they have limited visibility in the fabrication process
and the supply chain. However, they can still mitigate risk by using ICs from sources,
known to run their own fabrication facilities.

To our knowledge, this is the first work that uses distributed cryptographic protocols to build
and evaluate a hardware module architecture that is tolerant to multiple components carrying trojans or errors.
The effectiveness of this class of protocols for the problem of hardware trojans has been also been studied in 
previous theoretical works~\cite{dziembowski2016private, cryptoeprint:2016:527}.

To summarize, this paper makes the following contributions:
\begin{itemize}
\itemsep-0.025em
\item \textbf{Concept:} We introduce \textit{backdoor-tolerance}, where a system 
can still preserve its security properties in the presence of multiple compromised components.

\item \textbf{Design:} We demonstrate how cryptographic schemes (\S\ref{sec:protocols}) can be combined with N-variant system architectures (\S\ref{sec:arch}), to build high-assurance systems. Moreover, we introduce a novel distributed signing scheme
based on the Schnorr blind signatures (\S\ref{sec:sign}).

\item \textbf{Implementation:} We implement the proposed architecture 
by building a custom board featuring 120 highly-tamper resistant ICs, and realize secure variants of random number and key generation, public key decryption and signing (\S\ref{sec:impl}).

\item \textbf{Optimizations:} We implement a number of optimizations to ensure the \name architecture is competitive in terms of performance compared to single ICs. Some optimizations also concern embedded mathematical libraries which are of independent interest.

\item \textbf{Evaluation:} We evaluate the performance of \name, and use micro-benchmarks to assess the cost of all operations and bottlenecks (\S\ref{sec:eval}). 
\end{itemize}

Related works and their relation to \name are discussed in Section~\ref{sec:related}.

\section{Preliminaries}\label{sec:sec_mod}
In this section, we introduce \textit{backdoor-tolerance},
and outline our security assumptions for adversaries residing
in the IC's supply chain.

\subsection{Definition}\label{sec:definitions}
A \textit{Backdoor-Tolerant} system is able to 
ensure the confidentiality of the secrets it stores, the
integrity of its computations and its availability, against an adversary with defined capabilities.
Such a system can prevent data leakages and protect the integrity of the 
generated keys and random numbers.
Note that the definition makes no distinction between 
honest design or fabrication mistakes and hardware trojans, and 
accounts only for the impact these errors have on the security properties of the system.





\subsection{Threat Model}\label{sec:threats}


We assume an adversary is able to incorporate errors (e.g., a hardware trojan) in some ICs but not all of them. This is because, ICs are manufactured in many different fabrication facilities and locations by different vendors and the adversary is assumed not to be able to breach all their supply chains. Malicious ICs aim to be as stealthy as possible, and 
conceal their presence as best as possible, while benign components may have hard to detect errors (e.g., intermittent faults) that cannot be trivially uncovered by post-fabrication tests. Hence, malicious and defective components are indistinguishable from benign/non-faulty ones.

The adversary can gain access to the secrets stored in the malicious ICs and may also 
breach the integrity of any cryptographic function run by the IC (e.g., a broken random 
number generator). Moreover, the adversary has full control over the communication buses used to transfer data to and from the ICs. Hence, they are able to exfiltrate any information on the bus and the channel controller, and inject and drop messages to launch attacks. Additionally, they are able to connect and issue commands to the ICs, and if a system features more that one malicious ICs, the adversary is able to orchestrate attacks using all of them (i.e., colluding ICs). We also assume that the adversary may use any other side-channel that a hardware Trojan has been programmed to emit on -- such as RF~\cite{appelbaum2013shopping}, acoustic~\cite{genkin2014rsa} or optical~\cite{backes2008compromising} channels. Additionally, the adversary can trigger malicious ICs to cease functioning (i.e., availability attacks). We also make standard cryptographic assumptions about the adversary being computationally bound.

Finally, we assume 
a software developer or device operator builds and signs the applications to be run on the ICs. From our point of view they are trusted to provide high-integrity software without 
backdoors or other flaws. This integrity concern may be tackled though standard high-
integrity software development techniques, such as security audits, public code 
repository trees, deterministic builds, etc. The operator is also trusted to initialize the device properly
and choose ICs and define diverse quorums so that the probability of compromises is minimized.


%
%
%
%
%
%

%

\section{Architecture}\label{sec:arch}
In this section, we introduce the \name architecture (build and evalutation in \S~\ref{sec:impl} and~\ref{sec:eval}), 
which draws from fault-tolerant designs and N-variant systems~\cite{strigini2005fault, chen1978n}. 
The proposed design is based on the thesis that given a diverse 
set of ${k}$ ICs sourced from ${k}$ untrusted suppliers, a trusted system can be build,
as long as at least one of them is honest.
Alternatively, our architecture can tolerate up to 100\% compromised or defective components,
if not all of them collude with each other.
As illustrated in Figure~\ref{fig:overview},
\name has three types of components: (1) a remote host, (2) an untrusted IC controller,
and (3) the processing ICs. 

\begin{figure}
  \centering

  \includegraphics[width=\columnwidth]{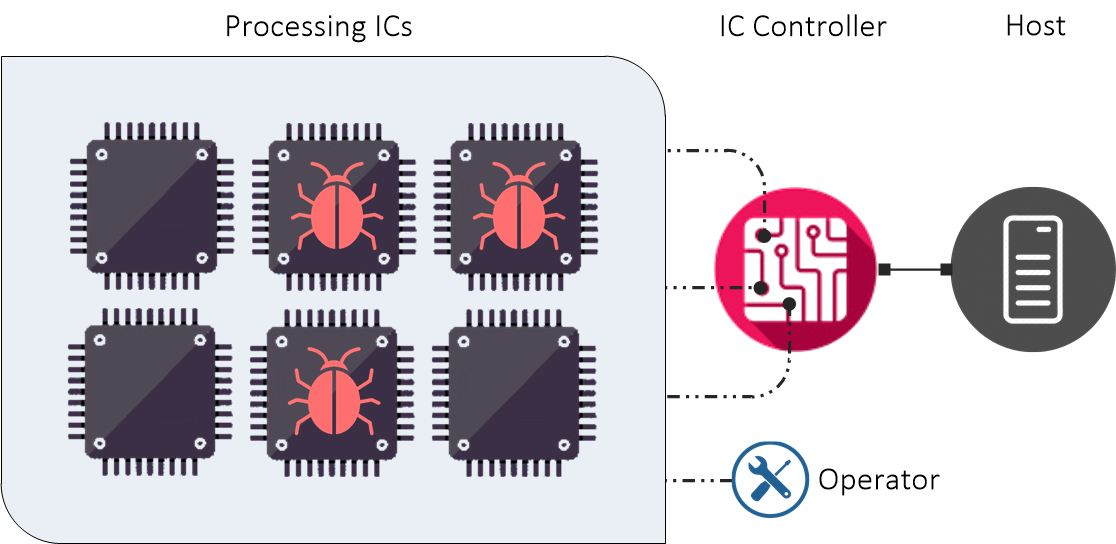}

  \caption{An overview of the \name 's distributed architecture, featuring all the integral components
  and communication buses. The gray area represents the cryptographic device, featuring several untrusted cryptoprocessors (ICs). As shown, the trusted operator interacts with individual ICs, while the host interacts with the device as a whole.}
  \label{fig:overview}
\end{figure}


\medskip \noindent{\bf{Processing ICs.}}
The ICs form the core of \name, as they are collectively entrusted 
to perform high-integrity computations, and provide secure storage space.
The ICs are programmable processors or microprocessors.
They have enough persistent memory to store keys and 
they feature a secure random number generator. Protection against physical 
tampering or side-channels is also desirable in most cases 
and mandated by the industry standards and best practices for 
cryptographic hardware.
For this reason, our prototype (Section~\ref{sec:impl}), 
comprises of components that verify to the 
a very high level of tamper-resistance (i.e., FIPS140-2 Level 4),
and feature a reliable random number generator (i.e., FIPS140-2 Level 3). 
Each implementation must feature two or more ICs, of which at least 
one must be free of backdoors. We define this coalition of ICs, as a \emph{quorum}.
The exact number of ICs in a quorum is determined by the 
user depending on the degree of assurance she wants to achieve (see also Subsection~\ref{sec:estimating} and the Assessment Subsection~\ref{sec:eval}). 

\medskip \noindent{\bf{IC controller.}}
The controller enables communication between the ICs, and makes
\name accessible to the outside world.
It manages the bus used by the processing ICs to communicate with each other
and the interface where user requests are delivered to.
Specifically, it enables: 
\begin{itemize}
\itemsep-0.025em
\item \textit{Unicast, IC-to-IC}: an IC A sends instructions an IC B, and receives the response.
\item \textit{Broadcast, IC-to-ICs}: an IC A broadcasts instructions to all other ICs, and receive their responses.
\item \textit{Unicast, Host-to-IC}: a remote client send commands to a specific IC, and receives the response.
\item \textit{Broadcast, Host-to-ICs}: a remote client broadcasts commands to all ICs, and receive their responses.
\end{itemize}

The controller is also an IC and is also untrusted. For instance it may drop, 
modify packets or forge new ones, in order to 
launch attacks against the protocols executed. It may also collude with 
one or more processing ICs.

\medskip \noindent{\bf{Operator.}}
The operator is responsible for setting up the system and configuring it. In particular, they are responsible for sourcing the ICs and making sure the quorums are as diverse as possible. Moreover, the operator determines the size of the quorums, 
and sets the security parameters of the cryptosystem (Section~\ref{sec:protocols}).
They are assumed to make a best effort to prevent compromises and may also be the owner of the secrets stored in the device.

\medskip \noindent{\bf{Remote Host.}}
The remote host connects to \name through the IC controller; it
issues high level commands and retrieves the results. The remote
host can be any kind of computer either in the local network or in a remote one.
In order for the commands to be executed by the ICs, the host must provide
proof of its authorization to issue commands, usually by signing them 
with a public key associated with the user's identity (e.g., certificate by a trusted CA).
Each command issued must include:
1) the requested operation, 2) any input data, and 3) the host's signature (see also Section~\ref{sec:acl}).
We note that a corrupt host may use \name to perform operations, but cannot extract
any secrets, or forge signatures (see also \S~\ref{sec:acl}).

\medskip \noindent{\bf{Communication Channels.}}
At the physical level, the ICs, the controller, and the hosts are connected through buses and network interfaces. Hence, all messages containing commands, as well as their eventual responses are routed to and from the ICs through untrusted buses. We assume that the adversary is able to eavesdrop on these buses and
tamper with their traffic (e.g., inject or modify packets). To address this, we use established cryptographic mechanisms to ensure integrity and confidentiality for transmitted data. More specifically, all unicast and broadcast packets
are signed using the sender IC's certificate, so that their origin and integrity 
can be verified by recipients. Moreover, in cases where the confidentiality of
the transmitted data needs to be protected, encrypted end-to-end channels are also established.
Such encrypted channels should be as lightweight as possible to minimize performance
overhead, yet able to guarantee both the confidentiality and the integrity of the transmitted data.

\subsection{Access Control Layer}\label{sec:acl}
Access Control (AC) is critical for all systems operating on confidential data.
In \name, AC determines which hosts can submit service requests to the system,
and which quorums they can interact with.
Despite the distributed architecture of \name, we can simply replicate  
established AC techniques on each IC.
More specifically, each IC is provided with the public keys of the hosts
that are allowed to have access to its shares and submit requests.
Optionally, this list may distinguish between hosts that have full access to the system,
and hosts that may only perform a subset of the operations.
Once a request is received, the IC verifies the signature of the host against this public key list. The list can be provided either when setting up \name, 
or when storing a secret or generating a key.

We note that it is up to the operator to decide
the parameters of each quorum (i.e., size ${k}$, ICs participating), and
provide the AC lists to the ICs. This is a critical procedure,
as if one of the hosts turns out to be compromised,
the quorum can be misused in order to either decrypt confidential
ciphertexts or sign bogus statements. However, the 
secrets stored in the quorum will remain safe as
there is no way for the adversary to extract them, unless they
use physical-tampering (which our prototype (\S\ref{sec:impl}) is also resilient against). This is also true in the case where one of the authorized hosts gets 
compromised. For instance, a malware taking over a host can use \name to sign
documents, but it cannot under any circumstances extract any secrets.

\subsection{Reliability Estimation}\label{sec:estimating}
In the case of cryptographic hardware, in order for the operator to decide on the threshold $k$ and the quorum composition, an estimation of the likelihood of hardware errors is needed. For this purpose we introduce \textit{k-tolerance}, which given $k$ foundries and an independent error probability, computes the probability that a quorum is secure.

\begin{equation}
\Pr[\mathsf{secure}] = 1-\Pr[\mathsf{error}]^k
\end{equation}\label{eq:estimate}

The quantification of the above parameters depends on the particular design and case, 
and as there is not commonly accepted way of evaluation, it largely depends 
on the information and testing capabilities each vendor has.
For instance, hardware designers that use split manufacturing~\cite{7835561, DBLP:conf/dac/WangCHR16, rajendran2013split} can estimate the probability of a backdoored component using the \textit{k-security} metric~\cite{imeson2013securing}.
On the other hand, COTS vendors cannot easily estimate the probability of a compromised component, as they are not 
always aware of the manufacturing technical details. Despite that, it is 
still possible for them to achieve very high levels of error and backdoor-tolerance
by increasing the size of the quorums and sourcing ICs from distinct facilities (i.e., minimizing the 
collusion likelihood).
It should be noted that as $k$ grows the cost increases linearly, while the 
security exponentially towards one.

\section{Secure distributed computations}\label{sec:protocols}
In this section, we introduce a set of protocols that leverage the diversity of
ICs in \name to realize standard cryptographic operations manipulating sensitive keying material. 
More specifically, our cryptosystem comprises of a key generation operation (\S\ref{sec:keygen}),
the ElGamal encryption operation (\S\ref{sec:enc}), 
distributed ElGamal decryption (\S\ref{sec:dec}), and
distributed signing based on Schnorr signatures (\S\ref{sec:sign})~\cite{elgamal1985public, fiat1986prove}.
These operations are realized using interactive cryptographic protocols 
that rely on standard cryptographic assumptions. For operational purposes, we also introduce
a key propagation protocol that enables secret sharing between non-overlapping quorums (\S\ref{sec:keyprop}).

Prior to the execution of any protocols, the ICs must be initialized with the 
domain parameters ${T=(p, a, b, G, n, h)}$ of the elliptic curve to be used, where
$p$ is a prime specifying the finite field ${\mathbb{F}_p}$, 
$a$ and ${b}$ are the curve coefficients, ${G}$ is the base point of the curve, with order ${n}$, 
and ${h}$ is the curve's cofactor. More details on the technical aspects of the initialization procedure are provided in the provisioning Section~\ref{sec:considerations}.

\medskip \noindent{\bf{Distributed Key Pair Generation.}}
The Distributed Key Pair Generation (DKPG) is a key building block
for most other protocols.
In a nutshell, DKPG enables a \textit{quorum} ${Q}$ of 
${k}$ ICs to collectively generate a 
random secret ${x}$, which is an element
of a finite field and a public value ${Y_{agg} = x \cdot G}$ 
for a given public elliptic curve point ${G}$. 
At the end of the DKPG protocol, each player holds
a share of the secret ${x}$, denoted as ${x_i}$
and the public value ${Y_{agg}}$. All steps of the protocol are illustrated in Figure~\ref{fig:DPG}, and explained in turn below.

\begin{figure}
  \centering
  \includegraphics[width=\columnwidth]{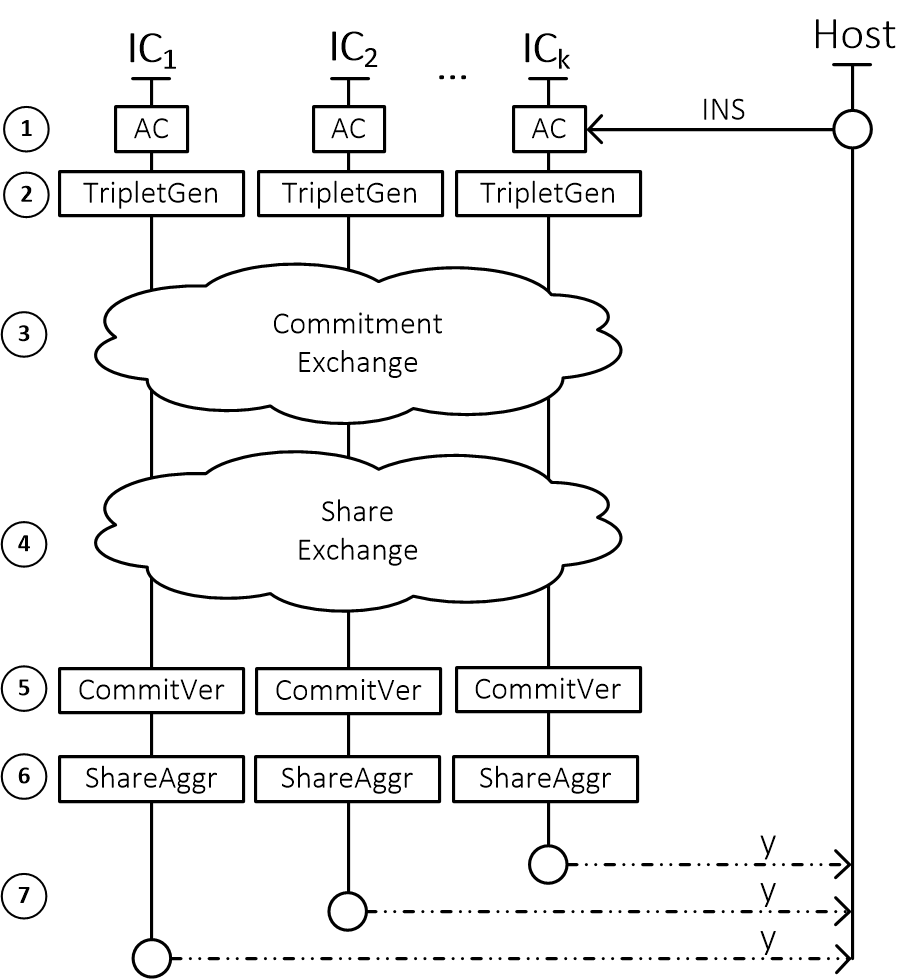}
  \caption{The interaction between the different participants during the execution of the Distributed
  Key Pair Generation (DKPG) protocol.}
  \label{fig:DPG}
\end{figure}

The execution of the protocol is triggered when an authorized host sends the corresponding instruction (\ding{182}).
At the first step of the protocol, each member of ${Q}$ runs Algorithm~\ref{alg:tripletgen}
and generates a triplet consisting of: 1) a share ${x_i}$, which is a randomly sampled element from
${\mathbb{Z}_n}$, 2) an elliptic curve point ${Y_i}$, and 3) a commitment to ${Y_i}$ denoted ${h_i}$. (\ding{183})

\begin{algorithm}
\SetAlgoLined
\SetKwInOut{Input}{Input}
    \SetKwInOut{Output}{Output}
    \Input{The domain parameters $\lambda$}
    \Output{A key triplet ${(x_i,Y_i,h_i)}$}
	$x_i \gets \text{Rand}(\lambda)$\\
	$Y_i \gets x_i \cdot G$\\
	$h_i \gets \text{Hash}(Y_i)$\\ 
	\Return ${(x_i,Y_i,h_i)}$\\
 \caption{TripletGen: Generation of a pair and its commitment.}
 \label{alg:tripletgen}
\end{algorithm}

Upon the generation of the triplet, the members perform a pairwise exchange of their commitments (\ding{184}), by the
end of which, they all hold a set ${H = \{h_1,h_2,..,h_t\}}$.
The commitment exchange terminates when 
${|H_q|=t\ \forall q \in Q}$. Another round of exchanges then starts,
this time for the shares of ${Y_{agg}}$ (\ding{185}) ${Y = \{Y_1,Y_2,..,Y_t\}}$.
The commitment exchange
round is of uttermost importance as it forces the participants
to commit to a share of ${Y_{agg}}$, before receiving 
the shares of others. This prevents attacks where an adversary first collects
the shares of others, and then crafts its share so as to bias the final pair, towards a secret key they know.

\begin{algorithm}
\SetAlgoLined
\SetKwInOut{Input}{Input}
    \SetKwInOut{Output}{Output}
    \Input{$(Y, H)$}
    \Output{Bool}
	\For{$i \in \{1,\ |Y|\}$}{
		\If{$\text{Hash} (Y_i)\neq h_i$}{
			\Return False\\	
		}
	\Return True\\
	}
\caption{CommitVerify: Checks ${Y_i}$, against their respective commitments.}
\label{alg:hashver}
\end{algorithm}

Once each member of the quorum receives $k$ shares (i.e., $|Y|=k$), it executes
Algorithm~\ref{alg:hashver} to verify the validity of
$Y$'s elements against their commitments in ${H}$. (\ding{186})
If one or more commitments fail the verification
then the member infers that an error (either intentional
or unintentional) occurred and the protocol is terminated.
Otherwise, if all commitments are successfully verified,
then the member executes Algorithm~\ref{alg:PointAgg} (\ding{187}) and
returns the result to the remote host (\ding{188}).
Note that it is important to return $Y_{agg}$, as well as the individual shares $Y_i$, as this 
protects against integrity attacks, where malicious ICs return a different share than the 
one they committed to during the protocol~\cite{DBLP:conf/crypto/Pedersen91, DBLP:journals/joc/GennaroJKR07}. Moreover, since $Y_i$ are shares of the public key, they are also assumed to be public, and available to any untrusted party.

\begin{algorithm}
\SetAlgoLined
\SetKwInOut{Input}{Input}
\SetKwInOut{Output}{Output}
\Input{Set of shares $Q$}
\Output{The aggregate of the shares ${q}$}
$q \gets0$\\
\For{$q_i \in Q$}{
	$q \gets q + q_i$\\
}
\Return ${q}$\\
 
 \caption{ShareAggr: Aggregates elements in a set of shares (e.g., ECPoints, field elements).}
 \label{alg:PointAgg}
\end{algorithm}


In the following sections, we rely on 
DKPG as a building block
of more complex operations.

\subsection{Distributed Public Key Generation}\label{sec:keygen}
The distributed key generation operation enables multiple ICs to collectively generate
a shared public and private key pair with shares distributed between all participating ICs. This is important in the presence of hardware trojans, as a single IC never gains access to the full private key at any point, while the integrity and secrecy of the pair is protected against maliciously crafted shares.

We opt for a scheme that offers the maximum level of confidentiality (\textit{t-of-t}, $k=t$), and
whose execution is identical to DKPG seen in Figure~\ref{fig:DPG}.
However, there are numerous protocols that allow for different thresholds, such as
Pedersen's VSS scheme~\cite{DBLP:conf/crypto/Pedersen91, DBLP:journals/joc/GennaroJKR07, DBLP:conf/acisp/StinsonS01}. 
The importance of the security threshold is discussed in more detail in Section ~\ref{sec:tradeoffs}.

Once a key pair has been generated,
the remote host can encrypt a plaintext using the public key $Y$,
request the decryption of a ciphertext, or
ask for a plaintext to be signed. In the
following sections we will outline the protocols
that realize these operations.


\subsection{Encryption}\label{sec:enc}

For encryption, we use the Elliptic Curve ElGamal scheme~\cite{elgamal1985public, DBLP:conf/icisc/Brandt05} (Algorithm~\ref{alg:Encryption}). This operation does not use the secret key, and can be performed directly on the host, or remotely by any party holding the public key, hence there is no need to perform it in a distributed manner.



\begin{algorithm}
\SetAlgoLined
\SetKwInOut{Input}{Input}
\SetKwInOut{Output}{Output}
\Input{The domain parameters $T$, the plaintext ${m}$ encoded as an element of the group $\mathbb{G}$, and the calculated public key ${Y_{agg}}$}
\Output{The Elgamal ciphertext tuple, ${(C_1,\ C_2)}$}
$r \gets Rand(T)$\\
$C_1 \gets r \cdot G$\\
$C_2 \gets m + r\cdot Y_{agg}$\\
\Return ${(C_1,\ C_2)}$\\ 
\caption{Encrypts a plaintext using the agreed common public key.}
\label{alg:Encryption}
\end{algorithm}

\subsection{Decryption}\label{sec:dec}
One of the fundamental cryptographic operations involving a private key is ciphertext decryption. 
In settings, where the key is held by a single authority, the 
decryption process is straightforward, 
but assumes that the hardware used to perform the decryption does not leak the secret decryption key. 
\name addresses this problem
by distributing the decryption process between ${k}$ distinct
processing ICs that hold shares of the key (Figure~\ref{fig:DEC}).

\begin{figure}
  \centering
  \includegraphics[width=\columnwidth]{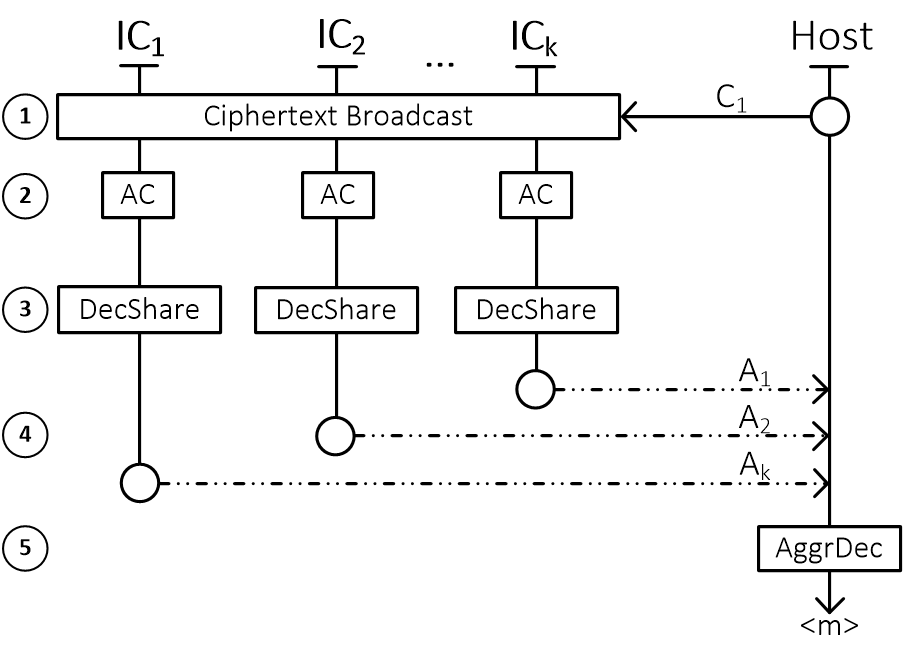}
  \caption{The interaction between the different ICs during the execution of the distributed
  decryption protocol.}
  \label{fig:DEC}
\end{figure}

The protocol runs as follows:
Initially, the host broadcasts the decryption instruction along with the part of the ciphertext $C_1$ to the processing ICs (\ding{182}). Upon reception, the ICs first verify
that the request is signed by an authorized user (\ding{183}), and then execute Algorithm~\ref{alg:DecShare}
to generate their decryption shares ${A_i}$ (\ding{184}). Once the shares are generated they are
sent back to the host, signed by the ICs and encrypted under the host's public key (\ding{185}). Once the host 
receives ${k}$ decryption shares, executes Algorithm~\ref{alg:DecCombinationFinal} to
combine them and retrieve the plaintext (\ding{186}).
 
\begin{algorithm}
\SetAlgoLined
\SetKwInOut{Input}{Input}
\SetKwInOut{Output}{Output}
\Input{A part of the Elgamal ciphertext ${(C_1)}$ and the IC's private key of ${x_i}$.}
\Output{The decryption share ${A_{i}}$, where ${i}$ is the IC's uid}
$A_{i} \gets -x_i \cdot C_1$\\
\Return ${A_{i}}$\\
\caption{DecShare: Returns the decryption share for a given ciphertext.}
\label{alg:DecShare}
\end{algorithm}

\begin{algorithm}
\SetAlgoLined
\SetKwInOut{Input}{Input}
\SetKwInOut{Output}{Output}
\Input{The Elgamal ciphertext ${C_2}$ and the set of decryption shares ${A}$.}
\Output{The plaintext ${m}$}

$D \gets 0$\\
\For{$A_i \in A$}{
	$D \gets D + A_i$\\
}
$m \gets (C_2 + D)$

\Return ${m}$\\
\caption{AggrDec:Combines the decryption shares and returns the plaintext for a given ciphertext.}
\label{alg:DecCombinationFinal}
\end{algorithm}

It should be noted that during the decryption process, the plaintext is not revealed to any other party except the host, and neither the secret key nor its shares ever leave the honest ICs. An extension to the above protocol can also prevent malicious ICs from returning arbitrary decryption shares, by incorporating a non-interactive zero knowledge proof~\cite{blum1988non} in the operation output.


\subsection{Random Number Generation}\label{sec:rng}
Another important application of secure hardware is the generation of a random fixed-length bit-strings
in the presence of active adversaries. The key property of such systems 
is that errors (e.g., a hardware backdoor), should not allow an adversary to 
bias the generated bitstring. 

The implementation of such an operation is straightforward. The 
remote host submits a request for randomness to all actors participating
in the quorum. Subsequently, each actor independently generates
a random share $b_i$, encrypts it with the public key of the host, and signs
the ciphertext with its private key.
Once the host receives all the shares, he combines them to retrieve the
$b$ and then uses an one way function (e.g., SHA3-512~\cite{bertoni2009keccak})
to convert it to a fixed length string.

\subsection{Signing}\label{sec:sign}
Apart from key generation, encryption and decryption, \name also 
supports distributed signing -- an operation that potentially manipulates 
a long term signature key. Here we introduce a novel multi-signature scheme, based on Schnorr signature~\cite{schnorr1991efficient}.

A multi-signature scheme allows a group of signers to distributively compute a signature on a common message.
There has been a very long line of works on multi-signature schemes \cite{OO91,MH96,MOR01,B03,CCS:BelNev06,LOSSW13} featuring several security properties (e.g. accountability, subgroup-signing) and achieving various efficiency trade-offs.  
A significant portion of work has been dedicated in reducing the trust in the key generation process. However, this  often involves the use of expensive primitives or increases the interaction between parties \cite{MOR01,B03,LOSSW13}. In our construction, we rely on \name's DKPG for the key generation process.

Our multi-signature scheme is based on Schnorr signatures \cite{schnorr1991efficient} and has similarities with Bellare and Neven \cite{CCS:BelNev06} construction.  One crucial difference between existing multi-signature schemes and ours, is that we utilize a host intermediating between signers (i.e., ICs). 
This intermediating host allows us to eliminate any interaction between the ICs and thus improve the efficiency of the construction. To further minimise the interaction between the host and ICs we adapt existing techniques used in proactive two-party signatures \cite{NKDM03} into the multi-signature context.

We let $PRF_s(j)$ be a pseudorandom function with key $s$, that takes $j$ as input and instantiates it as $Hash(s||j)$.

\begin{algorithm}
\SetAlgoLined
\SetKwInOut{Input}{Input}
\SetKwInOut{Output}{Output}
\Input{
The digest of the plaintext to be signed $H(m)$, the IC's private key of ${x_i}$ and secret $s$, an index $j$, and the aggregated random EC point ${R_j}$.}
\Output{The signature share tuple ${(\sigma_{ij},\epsilon_j)}$}
$\epsilon_{j} \gets Hash(R_j||Hash(m)||j)$\\
$r_{ij} \gets PRF_s(j)$\\
$\sigma_{ij} \gets r_{ij} -x_i \cdot \epsilon_j \mod n$\\
\Return ${(\sigma_{ij}, \epsilon_j)}$\\
\caption{SigShare: Returns the signature share of the IC for a given plaintext and index $j$.}
\label{alg:SigShare}
\end{algorithm}



Initially, all ${k}$ ICs cooperate to generate a public key ${y}$ using the distributed key generation operation (Section~\ref{sec:keygen}), and store securely their own key share ${x_i}$. Moreover, each IC generates a secret $s$ for the PRF, and stores it securely. After these steps, the signing protocol can be executed. The protocol comprises of two phases: \textit{caching} and \textit{signing}.

In the \textit{caching phase} (Figure~\ref{fig:SIGN_caching}), the host queries the ICs for random group elements $R_{ij}$, where $i$ is the id of the IC and $j$ an increasing request counter (\ding{182}).
Once such a request is received, the IC verifies that the host is authorized to submit such a request and then applies the keyed pseudorandom function on the index $j$ to compute $r_{i,j}=PRF_s(j)$ (\ding{183}). 
The IC then uses $r_{i,j}$ to generate a group element  (EC Point) $R_{ij}=r_{i,j}\cdot G$ (\ding{184}), which is then returned to the host.
Subsequently, the host uses Algorithm~\ref{alg:PointAgg} to compute the aggregate ($R_j$) of the group elements (Algorithm~\ref{alg:PointAgg}) received from the ICs for a particular $j$, and stores it for future use (\ding{185}).
It should be noted that the storage cost for $R_j$ is negligible: for each round the host stores only $65$ Bytes or $129$ Bytes  depending 
on the elliptic curve used (for $R_j$) and the corresponding round index $j$. This allows
the host to run the caching phase multiple times in parallel, and generate a list of random elements that can be later used, thus speeding up the signing process.

The \textit{signing phase} (Figure~\ref{fig:SIGN_v2}) starts with the host sending a Sign request to all ICs (\ding{182}). Such a request includes the hash of the plaintext $Hash(m)$, the index of the round $j$, and the random group element $R_j$ corresponding to the round. Each IC then first verifies that the host has the authorization to submit queries (\ding{183}) and that the specific $j$ has not been already used (\ding{184}). The latter check on $j$ is to prevent attacks that aim to either leak the private key or to allow the adversary to craft new signatures from existing ones. If these checks are successful, the IC executes Algorithm~\ref{alg:SigShare} and generates its signature share (\ding{185}). The signature share $(\sigma_{i,j},\epsilon_j)$ is then sent to the host (\ding{186}). Once the host has collected all the shares for the same $j$, can use Algorithm \ref{alg:PointAgg} on all the $\sigma_{i,j}$ to recover $\sigma_j$, obtaining the aggregate signature $(\sigma_j, \epsilon_j)$ (\ding{187}).

\begin{figure}
\centering
\includegraphics[width=\columnwidth]{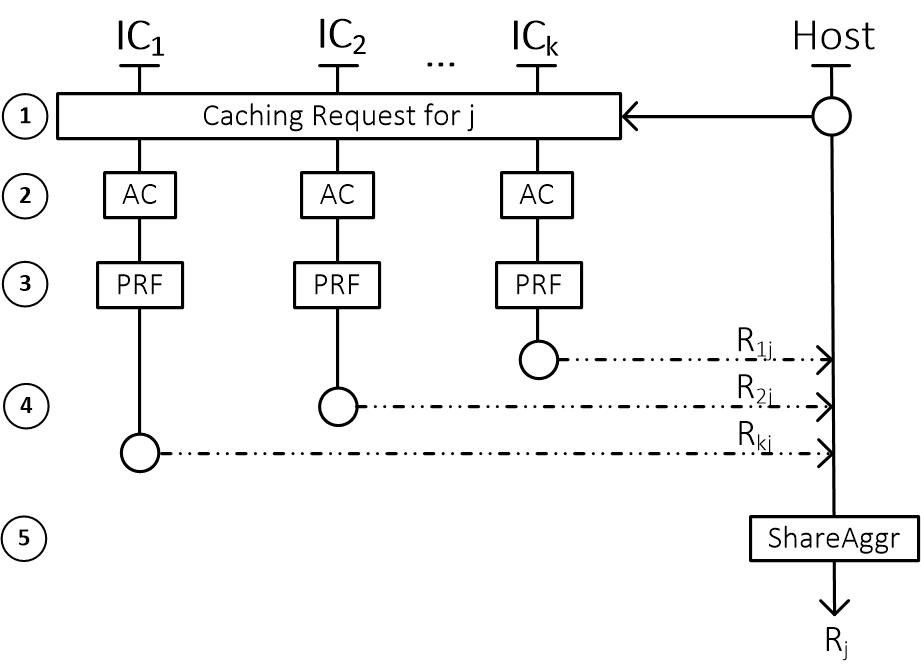}
\caption{The interaction between the different players during the caching phase of the distributed
  signing protocol.}
\label{fig:SIGN_caching}
\end{figure}


\begin{figure}
\centering
\includegraphics[width=\columnwidth]{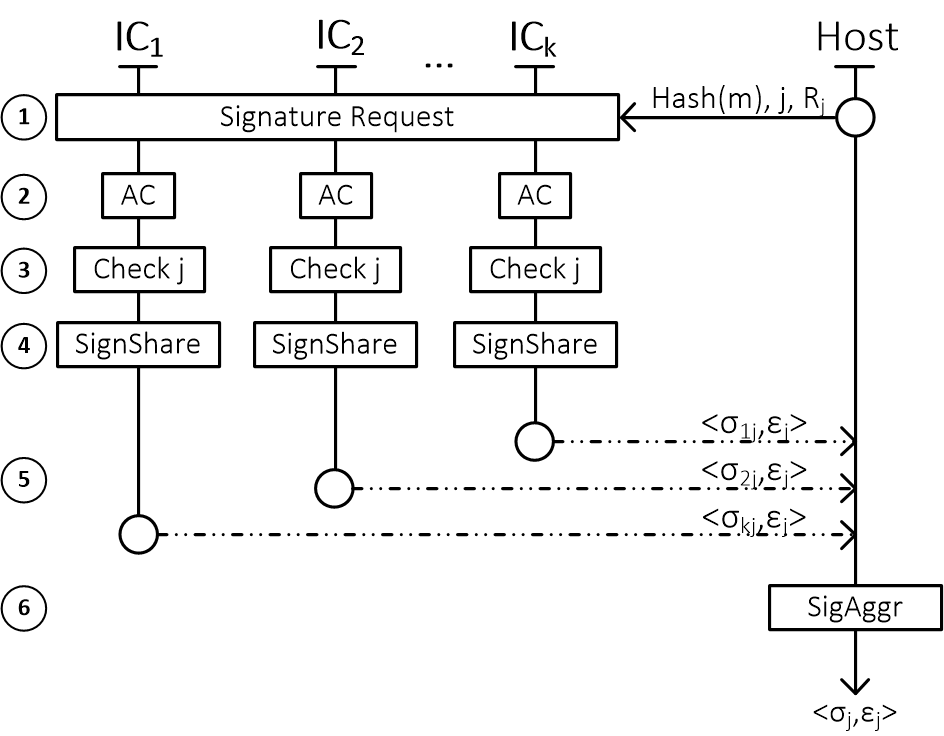}
\caption{The figure illustrates the interaction between the different players when signing.}
\label{fig:SIGN_v2}
\end{figure}

The recipient of ${\langle (m,j),\ \sigma, \epsilon \rangle}$ can verify the validity of the signature
by checking if $\epsilon=Hash(R||Hash(m)||j)$, where  $R = \sigma \cdot G + \epsilon \cdot Y$.


\medskip \noindent{\bf{Security Analysis.}}
The security of a multi-signature scheme requires that if at least one of the signers behaves honestly, then it is impossible to forge a multi-signature. In our context, the presence of a single honest IC guarantees the signature cannot be forged even in presence of a malicious host controlling all the remaining ICs. The key generation process is a crucial step for the security of multi-signature schemes due to the so called \emph{rogue-key attacks}~\cite{CCS:BelNev06}. The latter enables an attacker to maliciously generate shares of the public key in such a way that is possible for her to forge multi-signatures. In \name's DKPG process, malicious ICs cannot subvert the key generation process as long as at least one IC is not colluding with the others, thus preventing rogue-key attacks.
 The security of our multi-signature scheme is proved in Theorem~\ref{thm:sign} assuming the hardness of on the one-more discrete logarithm problem \cite{BNPS03}.  We refer to Appendix~\ref{sec:proof} for the security definitions of multi-signatures and the proof of the following Theorem.
 \begin{restatable}{theorem}{secproof}\label{thm:sign}
If there exists a $(q_S,q_H,\epsilon)$-forger $\F$ for the multi-signature scheme described in Figures~\ref{fig:SIGN_caching} and \ref{fig:SIGN_v2} interacting in $q_S=O(1)$ signature queries, making at most $q_H$ Hash queries and succeeding with probability $\epsilon$, then there exists an algorithm $\A$ that solves the $(q_S+1)$-DL problem with probability at least $$\delta \geq \frac{\epsilon^2}{q_{H}^{q_S+1}}-negl(\lambda)$$ 
\end{restatable}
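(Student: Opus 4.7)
My plan is to reduce the existence of the forger $\F$ to an algorithm $\A$ for the $(q_S+1)$-DL problem in the random oracle model, with the PRF $\text{PRF}_s(\cdot)=\text{Hash}(s\,\|\,\cdot)$ also modelled as a random oracle (note that $\F$ never learns the honest IC's secret $s$, so it cannot query this oracle directly). Given challenges $U_0,U_1,\dots,U_{q_S}$ and up to $q_S$ DL-oracle queries, $\A$ plays the role of the single honest IC while $\F$ controls the host and the remaining $k-1$ ICs. At setup, $\A$ embeds $U_0$ as the honest public-key share $Y_1$. The DKPG commitment round of Section~\ref{sec:keygen} neutralises rogue-key attacks: $\F$ must fix $Y_2,\dots,Y_k$ through $h_i=\text{Hash}(Y_i)$ before $U_0$ is revealed, so the adversarial shares are statistically independent of the challenge. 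In the caching phase, $\A$ guesses up front which $q_S$ of $\F$'s requests will later be reused in signing queries and answers those with $R_{1,j}=U_j$, implicitly defining $r_{1,j}=\text{PRF}_s(j)$ as the unknown discrete logarithm of $U_j$. The remaining caching requests are answered with freshly sampled elements whose discrete logarithms $\A$ knows.

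For each signing query on $(m,j,R_j)$ that corresponds to a cached challenge, $\A$ programs the random oracle so that $\text{Hash}(R_j\,\|\,\text{Hash}(m)\,\|\,j)=\epsilon_j$ for a freshly sampled $\epsilon_j$, and then calls its DL oracle on the point $U_j-\epsilon_j\cdot U_0$ to obtain $\sigma_{1,j}=r_{1,j}-x_1\epsilon_j$, which is returned to $\F$. The single-use check on $j$ in Algorithm~\ref{alg:SigShare} ensures at most $q_S$ DL-oracle calls are made, exactly matching the budget of the one-more-DL game. When $\F$ eventually outputs a forgery $(m^*,j^*,\sigma^*,\epsilon^*)$, $\A$ applies the generalised Bellare--Neven forking lemma~\cite{CCS:BelNev06} to the critical hash query, rewinding $\F$ to produce a second valid forgery with $\epsilon^{*'}\neq\epsilon^*$ over the same aggregated $R^*$. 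Subtracting the two verification equations yields $DL(Y_\text{agg})$; combined with the commitment-enforced independence of $Y_2,\dots,Y_k$ from $U_0$, this lets $\A$ extract $x_1=DL(U_0)$. The remaining logarithms $DL(U_j)=\sigma_{1,j}+x_1\epsilon_j$ are then read off from the signing transcripts, completing the $(q_S+1)$-DL solution.

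The main obstacle I anticipate is the probability analysis underlying the $q_H^{q_S+1}$ factor in the denominator. Because $\A$ does not know in advance which caching queries will be reused in signing, it must guess up front which $q_S$ of the (up to $q_H$) caching-phase hash queries to embed the challenges $U_1,\dots,U_{q_S}$ into, and it must also guess the critical hash query corresponding to the forgery; these two guesses contribute factors of $q_H^{q_S}$ and $q_H$ respectively, which combine with the $\epsilon^2$ factor from the forking lemma to give the stated bound. A secondary subtlety is rigorously arguing that the commit-before-open structure of DKPG alone suffices to extract the adversarial contributions to $Y_\text{agg}$: since the shares are not accompanied by a proof of knowledge, this step may require a small hybrid over the commitment hash in the ROM or a mild strengthening of the commitment scheme. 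With these pieces handled, the final bound $\epsilon^2/q_H^{q_S+1}-\text{negl}(\lambda)$ follows from standard forking-lemma accounting together with a union bound over simulation aborts.
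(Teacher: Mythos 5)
Your reduction skeleton (embed one challenge as the honest IC's key share $Y_1$, embed the remaining $q_S$ challenges as the cached nonces $R_{1,j}$, answer each signing query with one call to $O_{dlog}$ on $R_{1,j}-\epsilon_j\cdot Y_1$, then fork on the critical hash query and extract $x_1$ from the two forgeries) matches the paper's Algorithm~\ref{alg:adversary}. However, there is a genuine gap at the rewinding step. The $(q_S+1)$-DL game allows \emph{strictly fewer} than $q_S+1$ calls to $O_{dlog}$ in total, and your first execution already spends all $q_S$ of them on signing queries. After you rewind $\F$ and change the answer to the critical hash query, $\F$ will again issue up to $q_S$ signing queries (possibly on different messages after the fork point); the cached nonces must be replayed unchanged to keep the two executions consistent, you have no $O_{dlog}$ budget left, and you cannot fall back on the usual Schnorr simulation by programming the random oracle because the $R_{1,j}$ were fixed in the caching phase before the messages were known. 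Your proposal never says how these second-run signing queries are answered. The paper's resolution is to \emph{recycle} the first-run signature shares, which is sound only when the index of the random-oracle query backing each signing query is identical in both executions; the generalized forking lemma is therefore applied to the whole tuple of target indexes $(i,J)$ with $|J|\le q_S$, and summing the Cauchy--Schwartz bound over the $q_H^{q_S+1}$ possible tuples is exactly what produces the denominator in $\delta\ge\epsilon^2/q_H^{q_S+1}-negl(\lambda)$. Your accounting attributes that factor instead to an up-front guess of which caching requests will be consumed plus a guess of the critical query; besides being unnecessary in the paper's model (the counter $j$ forces the cached nonces to be consumed in order, so all $q_S$ of them are embedded as challenges without guessing), such a guess does nothing to repair the oracle budget in the replay, which is the actual crux of the proof.

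A secondary point: in the paper's security game the key generation is assumed to execute successfully and the reduction itself samples $x_2,\ldots,x_n$ for the corrupted ICs before handing them to $\F$, so after the fork $x_1$ is obtained simply as $(\sigma-\sigma')(\epsilon'-\epsilon)^{-1}-\sum_{i\ge 2}x_i \bmod p$, and the remaining logarithms $r_{1,j}$ are read off the signing transcripts as you describe. No extraction from the DKPG commitments, proof of knowledge, or strengthening of the commitment scheme is needed; your worry there addresses a stronger adversary (one who chooses its key shares adversarially inside the reduction) than the theorem claims to handle.
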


\subsection{Key Propagation}\label{sec:keyprop}
In cases where more than one quorum is available it is useful to enable them all
to handle requests for the same public key. This is of particular importance for both the system's scalability and its availability, as we further discuss in Sections~\ref{sec:scalability} and ~\ref{sec:considerations} respectively.

Once a quorum $Q_1$ has generated its public key $y$,
(Section~\ref{sec:keygen}) the operator can specify another quorum $Q_2$ that is to be loaded with $y$.
Each member $q_i$ of $Q_1$ then splits its secret $x_i$ in $|Q_2|$ shares and distributes them to the
individual members of $Q_2$. To do that $q_i$ follows the secret sharing method shown in Algorithm~\ref{alg:simpless}.
However, any $t$-of-$t$ secret sharing schemes proposed in the
literature~\cite{blakley1979safeguarding,shamir1979share,DBLP:conf/crypto/Pedersen91} would do.

\begin{algorithm}
\SetAlgoLined
\SetKwInOut{Input}{Input}
\SetKwInOut{Output}{Output}
\Input{The domain parameters $T$, a secret $s$ which is to be shared, and the number of shares $k$.}
\Output{A vector of shares $\vec{v}_{s}$}
\For{($i=0$ to $k-2$)}{
$\vec{v}_s[i] \gets Rand(T)$
}
$\vec{v}_s[k-1] \gets (s-\vec{v}_s[1] - \vec{v}_s[2] - ... - \vec{v}_s[k-2])$\\
\Return $\vec{v}_s$\\
\caption{SecretShare: Returns a vector of shares from a secret.}
\label{alg:simpless}
\end{algorithm}

Once each member of $Q_2$ receives $|Q_1|$ shares, which they then combine to retrieve their share of the secret
corresponding to $y$. 
Each member of $Q_2$ can retrieve its share by summing the incoming shares, modulo $p$ (the prime provided in the domain parameters $T$).
An additional benefit of such a scheme is that $Q_1$ and $Q_2$ may have different sizes.

It should be also noted that a naive approach of having each member of $q_1$
send their share of $x$ to a member of  $q_2$ is insecure, as malicious members from
$q_1$ and $q_2$ can then collude to reconstruct the public key. 

\section{Implementation}\label{sec:impl}
In this section, we provide the implementation details of our \name prototype. We first focus on the custom 
hardware we built, and outline its different components and capabilities. Thereafter, we discuss the 
development of the software for the untrusted ICs and the details of the remote host.

\begin{figure}[h]
  \centering
  \includegraphics[width=\columnwidth]{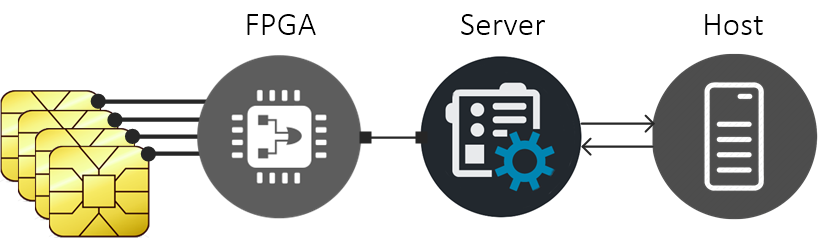}
  \caption{Overview of \name's components.}
  \label{fig:setup}
\end{figure}

\subsection{Hardware Design \& Implementation}

For our \name prototype, we designed our own custom circuit board, which features 120 processing ICs (set to use 40 quorums of 3 smartcards from different manufacturers) interconnected with dedicated buses with 1.2Mbps bandwidth.

The processing ICs are JavaCards (version 3.0.1), loaded with a custom applet implementing the protocols outlined in Section~\ref{sec:protocols}. JavaCards are an suitable platform as they provide good interoperability (i.e., applets are manufacturer-independent), which contributes to IC-diversification and prevents lock-in to particular vendors. Moreover, they also fulfill all the requirements listed in Section~\ref{sec:sec_mod}: (1) they are tamper-resistant (FIPS140-2 Level 4, CC EAL4) and can withstand attacks from adversaries with physical access to them~\cite{rankl2004smart}, (2) they feature secure (FIPS140-2 compliant) on-card random number generators, (3) they offer cryptographic capabilities (e.g., Elliptic curve addition, multiplication) through a dedicated co-processor and (4) there are numerous independent fabrication facilities (Section~\ref{sec:eval}). In addition to these, they have secure and persistent on-card storage space, ideal for storing key shares and protocol parameters.



The host is implemented using a computer that runs a python client application, which submits the user requests (e.g., Ciphertext Decryption) to \name using a RESTful API exposed by the device.
The incoming requests are handled by a Spring server, which parses them,
converts them to a sequence of low-level instructions, and then forwards these to the IC controller, through an 1Gbps TCP/UDP interface.
The ICs controller is a programmable Artix-7 FPGA that listens for incoming instructions
and then routes them to the processing IC, through separate physical connections.
We took special care that these buses offer a high bandwidth (over 400kbps), to prevent bottlenecks between controller and ICs even under extreme load. Once the ICs return the results, the controller communicates them back to the server, that subsequently forwards them to the host.

\begin{figure}
\centering
\includegraphics[width=\columnwidth]{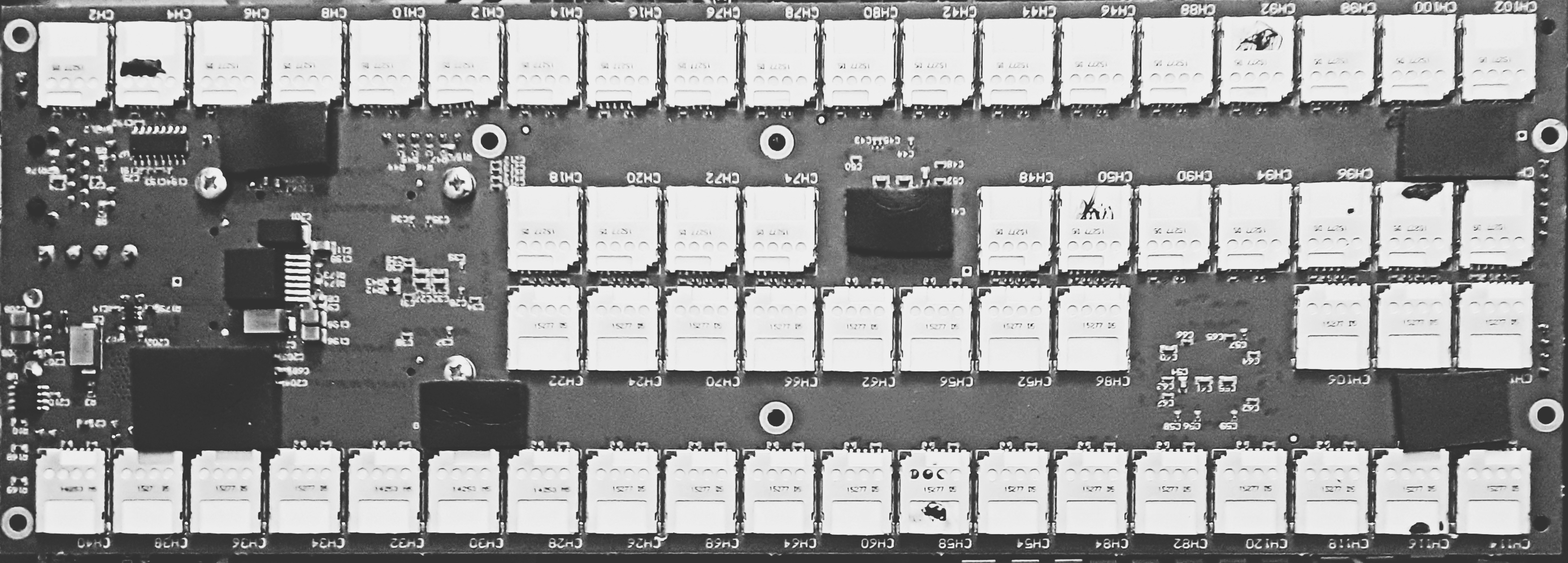}
\caption{\name's smartcard board supports 120 ICs (60 on each side). Our configuration splits them in 40 quorums of 3 diverse ICs each.}
\label{fig:board}
\end{figure}



\subsection{Software}\label{sec:soft}
We implement the protocols of Section~\ref{sec:protocols} and provide the necessary methods for inter-component communication and system parameterization.

We develop and load the processing ICs with JavaCard applets implementing methods for 1) Card Management, 2) Key Generation, 3) Decryption, and 4) Signing. Although JavaCard APIs since version 2.2.2 specifies a BigNumber class, this class is either unsupported by real cards or provides only too small internal type length (typically at most 8 bytes).  The only third-party BigInteger library available (i.e., BigNat\footnote{\url{https://ovchip.cs.ru.nl/OV-chip_2.0}}) is unmaintained and lacked essential operations. 
Moreover, low-level elliptic curve operations are only poorly supported by standard JavaCard API. The IC vendors often offer own proprietary APIs which provides the required operations (e.g., EC point addition) -- but for the price of reduced portability.  
This made the implementation of our cryptographic schemes more complicated. 

We extend BigNat to provide methods catering to our specific needs. Additionally, we develop own EC operations library based solely on public JavaCard API to support ICs where proprietary API is not available or cannot be used. Our EC library provides methods for EC point initialization, negation, addition, subtraction and scalar multiplication, and has been released as an independent project~\footnote{https://OpenCryptoJC.org}.

Note, that although vendor's proprietary API limits the portability, it usually provides better performance and better protection against various side-channel attacks in comparison to custom software-only implementations. For this reason, we structured the IC applet code for easy incorporation of proprietary API with minimal code changes. Our EC library is thus used only when no other option is available for target IC.    
 
Our current implementation is based on the NIST P-256~\cite{turner2009elliptic, adalierefficient} curve that provides at least 128 bits of security. However, it can also be trivially adapted for any other curve.


\medskip \noindent{\bf{Optimizations.}}
We optimize our JavaCard applet for speed and to limit side-channel attacks. Although JavaCard applets are compiled with standard Java compiler, common Java implementation patterns (e.g., frequent array reallocation due to resizing) are prohibitively expensive on JavaCards. Therefore, we use the following optimization techniques based on good practices and performance measurements from real, non-simulated, smart cards~\cite{vsvendanuances}: 
\begin{itemize}
\itemsep-0.025em
\item We use hardware accelerated cryptographic methods from JavaCard API instead of custom implementations interpreted by JavaCard virtual machine where possible.
\item We store the session data in the faster RAM-allocated arrays instead of persistent, but slower EEPROM/Flash
\item We use copy-free methods which minimize the move of data in memory, and also utilize native methods provided by JCSystem class for array manipulation like memory set, copy and erase.
\item We made sure to pre-allocate all cryptographic engines and key objects during the applet installation. No further allocation during the rest of the applet lifetime is performed.
\item Surplus cryptographic engines and key objects are used to minimize the number of key scheduling and initialization of engine with a particular key as these operations impose non-trivial overhead~\cite{vsvendanuances}.  
\item We refrain from using single long-running commands which would cause other cards to wait for completion, thus increasing the latency of the final operation. 
\end{itemize}

Finally, we optimized two fundamental operations commonly used in cryptographic
protocols: 1) integer multiplication, and 2) the modulo operation optimized for 32 byte-long EC 
coordinates. This was necessary, as the straightforward implementation
of these two algorithms in JavaCard bytecode is both slow
and potentially vulnerable to side-channel timing attacks. Instead, we
implemented both operations so as to use the native RSA engine and 
thus have constant-time run-time.

\sloppy
The integer multiplication of ${a}$
and ${b}$ can be rewritten as ${a\cdot b = ((a+b)^2 - a^2 - b^2) / 2}$. The squaring operation (e.g., ${a^2}$) can be quickly computed using
a pre-prepared raw RSA engine with a public exponent equal to 2 and
a modulus ${n}$, that is larger than the sum of the lengths of both operands.
On the other hand, the integer modulo of 
two operands ${a}$ (64 bytes long) and ${b}$ (32 bytes long)
is not so straighforward. We exploit the fact that ${b}$ is always the 
order of the fixed point G in the P-256 Elliptic Curve~\cite{turner2009elliptic, adalierefficient}, and
transform ${a \mod b = a - (((a \cdot x) \gg z) \cdot x)}$ where ${x}$ and ${z}$ values are pre-computed offline~\cite{granlund1994division}. As a result, a modulo operation
can be transformed to two RSA operations, one shift 
(with $z$ being multiple of 8) and one subtraction.
Note that we cannot directly use RSA with a public exponent equal to 1
as operands are of different length and also shorter than smallest RSA
length available on the card.

\subsection{System States} \label{sec:states}
Initially, the system is in an non-operational state, where the processing ICs do not 
respond to user requests. To make it operational, a secure initialization process
has to be carried out. During the initialization 
the processing ICs and the other components
described in~\ref{sec:arch} are loaded with
verified application packages, and the domain parameters
for the distributed protocols are set.
Moreover, the ICs are provided with their certificates that they will
later use to sign their packets and establish secure communication channels.

Once the initialization process has been completed, the system switches to an operational state
and is available to serve user requests. Depending on the configuration,
the system may be brought back to an uninitialized state, in order to load new software or change the
protocol parameters. We further discuss the importance 
of the initialization process in Section~\ref{sec:considerations}.


\section{Evaluation}\label{sec:eval}
In this section, we evaluate \name by examining both its performance,
and its qualitative properties.

\medskip \noindent{\bf{Experimental Setup.}}
All evaluations were performed using the setup illustrated in Figure~\ref{fig:setup}.
The host is a single machine with a CentOS 7 OS (3.10.0 Linux kernel),
featuring a 3.00GHz Intel(R) Core i5-4590S CPU, 16GB of RAM, and uses a python client application to
submit service requests to \name, through a 1Gbps Ethernet interface (as described in Section~\ref{sec:impl}).
Upon reception, the server uses the Java Spring framework~\cite{johnson2005introduction} to parse
them, and then forward the instructions to the Artix-7 FPGA, which subsequently routes them to the individual smart cards.
In all experiments, we collect response-time measurements from both the host and the Spring server. On average the round-trip from the host to the server takes ${5ms}$.
For accuracy, we use the host measurements in the rest of the section.

\subsection{Performance Impact}\label{sec:overhead}
This subsection evaluates the performance impact of \name, and compares
its throughput and latency with that of a single-IC system. Moreover, it
examines the impact of our optimizations on the overall performance of the 
system.

\medskip \noindent{\bf{Methodology.}}
To better understand the overhead that the use of a distributed architecture
entails we run experiments that measure the latency as the size of the protocol quorum grows.
We first submit 1,000 requests for each cryptosystem operation (Section~\ref{sec:protocols})
in one JavaCard and measure the response time.
We then extend the experiment to larger quorums with sizes ranging from 2 to 10, and measure the latency in completing the same 1,000 operations. 
Simultaneously, to gain a more in-depth understanding of the 
impact that each low-level instruction type has,
we micro-benchmark the response time for all intra-system communications.

\begin{figure}
  \centering
  \includegraphics[width=\columnwidth]{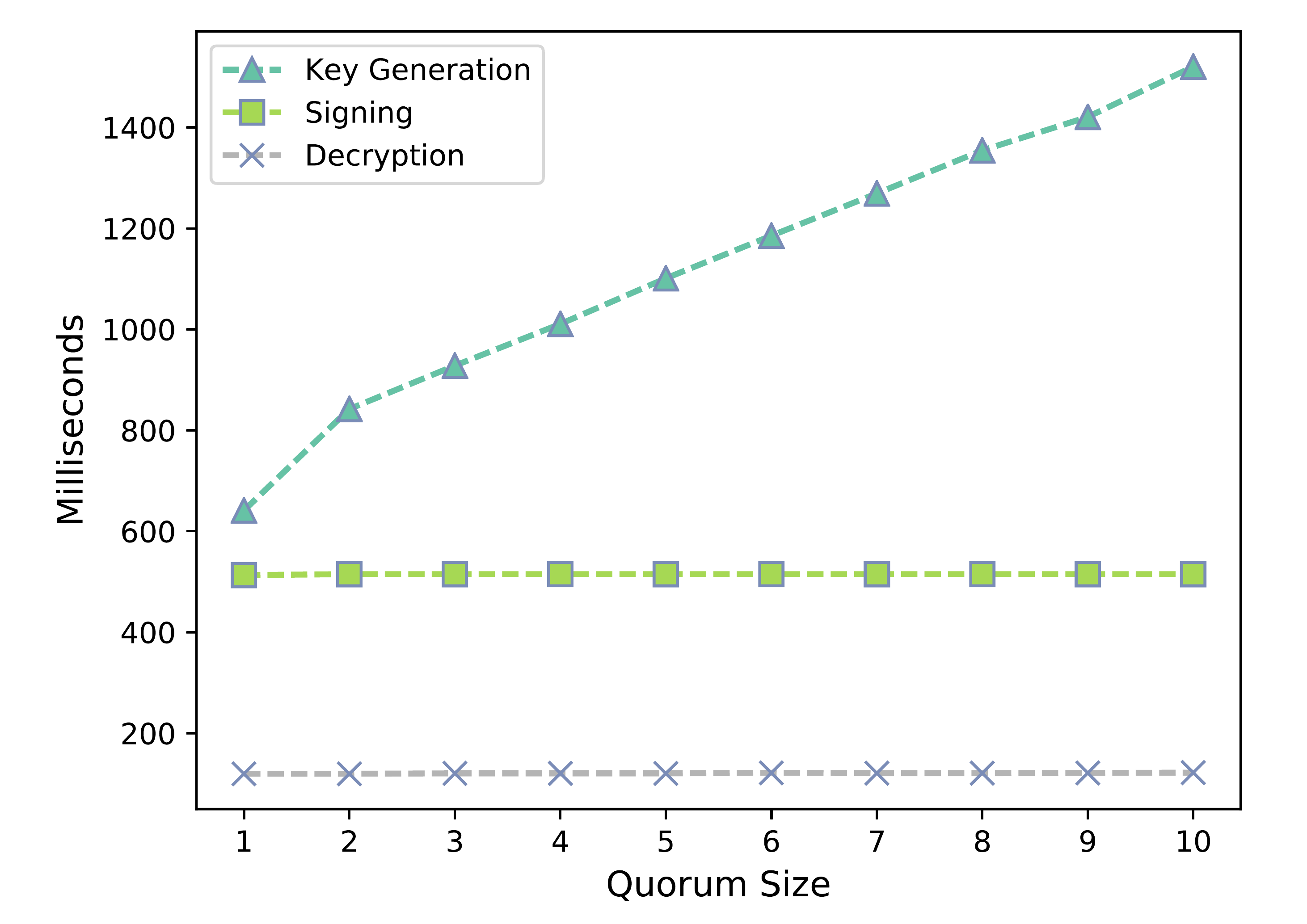}
  \caption{The average response time for each distributed operation of the cryptosystem, in relation to the quorum (i.e., a coalition of multiple ICs) size.}
  \label{fig:growth}
\end{figure}

\medskip \noindent{\bf{Results.}}
Figure~\ref{fig:growth} plots the average response time for performing Key Generation, Decryption and Signing using IC quorums of different sizes. 
To begin with, Decryption is the fastest ($119$ms), since it implements a single round protocol. Moreover, when we compare the runtime between the single-IC run, and the runs with multiple ICs, we observe that the latency is stable and the overhead remains always smaller than ${0.8}$\%. Hence, we conclude that the Decryption time does not increase with the size of the quorum, due to the ICs performing the decryption computations simultaneously. This highlights that Decryption is only CPU bound, and the network capacity of our prototype does not pose a bottleneck; and demonstrates that \name can essentially provide increased assurance, with negligible impact on the decryption runtime. It should be noted that, high-throughput decryption is of extreme importance in applications such as secure key derivation in mix-network infrastructures~\cite{danezis2009systems} that heavily rely on decryption. Similarly, the runtime for signing remains the constant ($\sim517$ms) regardless of the quorum size. This is because our multi-sig signing protocol does not require one-to-one interaction between the ICs.
The runtime difference between decryption and signing is mainly because of JavaCard API limitations. Specifically, we were forced to perform some of the mathematical operations for signing in the general purpose processor of the card, and not in the cryptographic coprocessor. The caching phase of the signing protocol (takes on average $169$ms/operation) was performed offline for thousands of indices $j$ and is not considered in the measurements, as it is ran only occasionally.  

As with signing and decryption, the runtime for random number generation is also constant, as the protocol is single round and can be executed in parallel.

\begin{figure}
  \centering
  \includegraphics[width=\columnwidth]{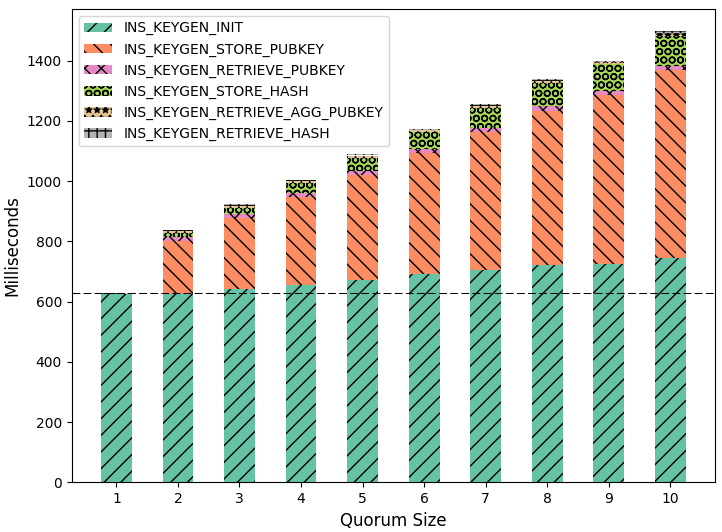}
  \caption{Breakdown of the runtime for low-level instructions that comprise the key generation operation, in relation to the quorum size. The horizontal reference line represent the cost of using a single IC.}
  \label{fig:keygen_exp}
\end{figure}

On the other hand, Key Generation (DKPG) requires two phases (commitment and revelation) and this adds significant latency.  In particular, as seen in Figure~\ref{fig:growth}, each additional IC results in an runtime increase of approximately 90ms. 
Figure~\ref{fig:keygen_exp} examines the timings of low-level operations involved in the key generation protocols. When quorums are small the cost of key generation is dominated by the ``INS\_KEYGEN\_INIT'' operation that generates a secret share, and derives a public key (624ms). However, as the IC quorums grow the operations related to exchanging public keys (``INS\_KEYGEN\_STORE\_PUBKEY'') and commitments (``INS\_KEYGEN\_STORE\_HASH'') become significant, and for a quorum of 10 ICs, nearly doubles the cost of key generation. However, for quorums of 3 the impact on runtime is merely 303ms, compared to a single IC. Other low-level operations used in DKPG have negligible cost, regardless of the quorum size.

\subsection{Scalability \& Extensibility}\label{sec:scalability}
This section examines how the throughput of our prototype grows
as the more processing power (i.e., quorums) is added. The absence of bottlenecks in the
system is important to be able to provide high-availability in production
environments.


\medskip \noindent{\bf{Methodology.}}
To determine how efficiently our design scales in practice, we 
run a series of experiments that measure \name's throughput,
for varying numbers of processing quorums. As described in 
Section~\ref{sec:impl}, our board supports up to 120 processing
ICs which can be divided into multiple quorums and serve requests
simultaneously. To benchmark the scalability of the system,
on each iteration of the experiment we submit 10,000 requests for each high-level
operation supported by our cryptosystem, and measure its throughput.
However, this time we fix the quorum size ${k}$ to 3, and 
on each iteration we increase the number of quorums 
serving the submitted requests by one, until we involve 40 quorums,
which is the maximum number of 
$3$-IC quorums that our prototype can support.
For simplicity, we assign each processing IC to only one quorum. However,
it is also technically possible for an IC to participate in more than one
quorums, and store more than one secret shares.

\medskip \noindent{\bf{Results.}}
Figure~\ref{fig:parallel} illustrates the throughput of the \name system (in operations per second) as more of the ICs are used for processing transactions.
The maximum throughput of \name was $315$ops/sec for
Decryption and $77$ops/sec for Signing, when using all $40$ quorums.
We also observe that as the number of quorums increases, the performance increases linearly, at a rate of ${\sim9}$ requests per second per additional quorum for the Decryption operation, and ${\sim1.9}$ requests per second for Signing. This illustrates that the system is CPU bound, and the communication fabric between ICs and the host is not a bottleneck. Consequently, a system with our proposed
architecture can be easily tailored for different use cases to provide the throughput needed in 
each of them. It should be also noted that using a lower threshold $k < t$ does not affect the performance of the system. However, this may result in some ICs being idle longer. For this purpose, it would be beneficial if ICs participated in more than one quorum, thus minimizing their idle time.

\begin{figure}
  \centering
  \includegraphics[width=\columnwidth]{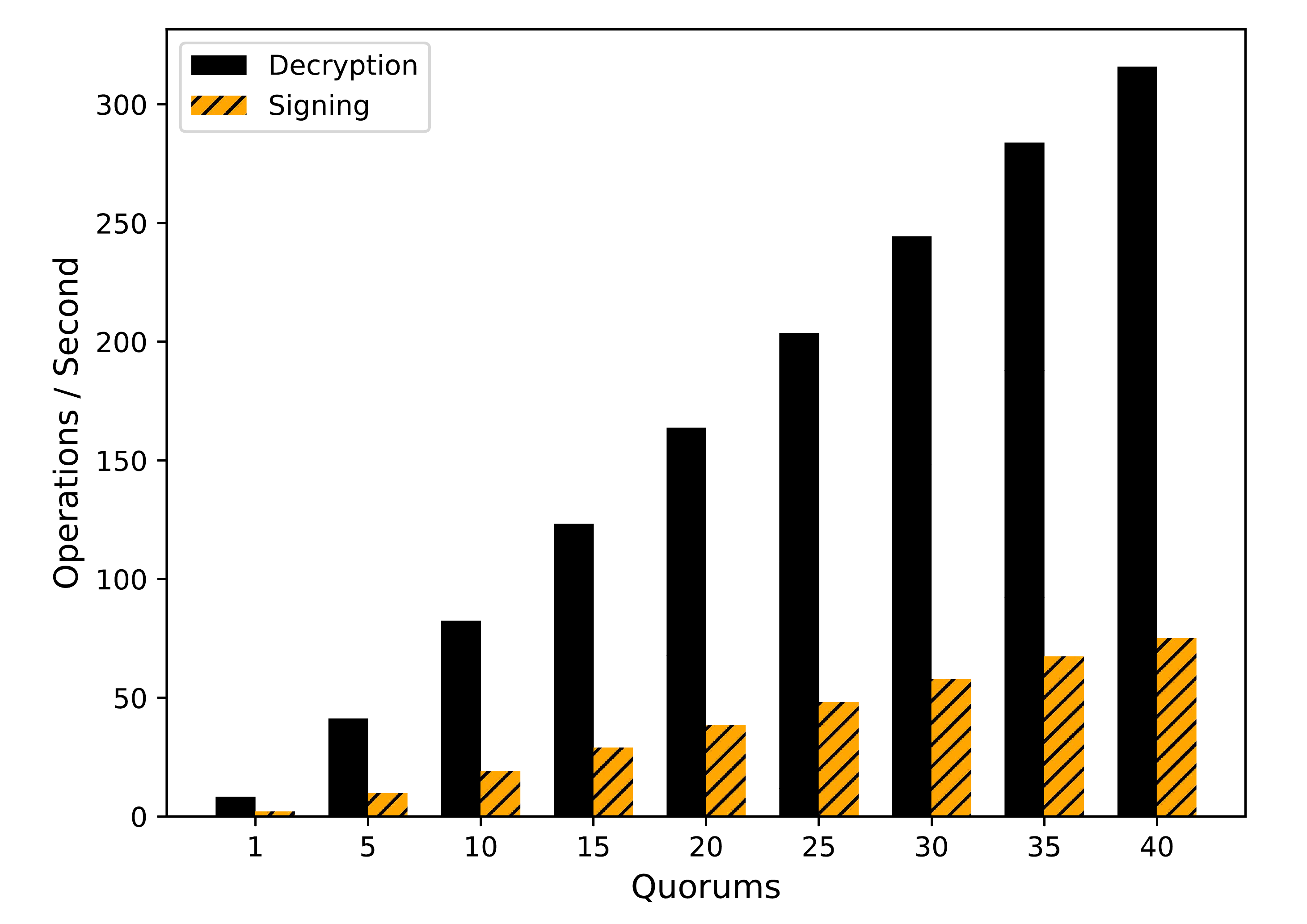}
  \caption{The average system throughput in relation to the number of quorums ($k=3$) that serve
  requests simultaneously. The higher is better.}
  \label{fig:parallel}
\end{figure}

\subsection{Tolerance levels}\label{sec:tradeoffs}
\name is tolerant against both fabrication-time and design-time attacks (assuming independent foundries and design houses). Alternatively, if there is only one design house, then \name protects against fabrication-time attacks
only. In this section, we examine the relationship between the system parameters and the tolerance levels provided depending on the attack type (Table~\ref{tbl:attacks}). The reported tolerance levels are those provided by the schemes outlined in Section~\ref{sec:protocols}, and were also empirically verified using our prototype introduced in Section~\ref{sec:impl}.

\begin{table}
\hyphenpenalty 10000

\begin{tabular}{c | c c c}
 \toprule
 \thead{Parameters} & \thead{Leakage} & \thead{Denial-of-Service} & \thead{IC Failures}\\
 \midrule
	$Single\ IC $ &   $0$ & $0$    & $0$ \\
	$k=t$ &   $t-1$ & $0$    & $n-1$ \\ 
	$k<t$ &   $k-1$ &  $t-k$	& $(t-k)*n$  \\ 
	\bottomrule \addlinespace
\end{tabular}
\caption{Number of malicious/faulty ICs that different setups can tolerate in each abnormal scenario. The system is assumed to feature $n$ identical quorums of size $t$, with a secret-sharing threshold $k$.}
\label{tbl:attacks}
\exhyphenpenalty 10000
\end{table}

In practice, the threshold ${k}$ and the size of the quorums $t$ express
the trade-off between confidentiality and robustness for each particular quorum. 
The relationship between these two parameters determines how many ICs 
can cease to function, before the quorum fails:
When ${k}$ equals the number of processing ICs ${t}$, 
then secrets are safe in the presence of ${t-1}$ compromised and colluding ICs.
On the other hand, a more ``relaxed'' threshold (${k<t}$) enables the quorum to remain fully functional unless more than $t-k$ ICs fail (maliciously or through wear).
Alternatively, ($k=t$)-systems can withstand multiple ICs failing (due to wear) using the technique introduced in Section~\ref{sec:keyprop}. This technique enables several quorums to handle requests for the same public key, and thus even if multiple ICs (and consequently the quorums they belong to) fail, the remaining quorums can still handle incoming requests for those public keys. 
It should be noted that this technique provides robustness only in the presence of faulty \& defective ICs, but does not mitigate denial of service attacks. This is because, all $n$ quorums are identical (same manufacturer diversity) and thus a malicious IC from a breached supply chain will participate in all quorums. In contrast, defective ICs will fail with a probability less than $1$ (post-fabrication tests detect always failing ICs) and thus several quorums will remain functional. 

From all the above, the security guarantees offered by \name are determined by the threshold ${k}$, the IC diversity and the number of quorums. In our prototype, we chose to maximize confidentiality, and resist the presence of ${t-1}$ actively malicious ICs. Malicious ICs launching denial-of-service attacks are easier to detect and replace, compared to those subtly leaking secrets or weakening keys. In cases where increased robustness and availability are paramount, the security level can be adjusted in favor of redundancy using the appropriate threshold schemes~\cite{DBLP:conf/crypto/Pedersen91}.

\subsection{Other Considerations}\label{sec:considerations}
In this section, we consider several qualitative properties commonly
used in the industry.

\medskip \noindent{\bf{Physical Security \& Diversity.}}
SmartCards forms the core of our prototype and have multiple benefits
as they were designed to operate in a hostile environment
that is assumed to be fully controlled by the adversary~\cite{rankl2004smart}.
For this reason, they come with very-high tamper-resistance (FIPS140-2, Level 4)
and secure storage capabilities, that are constantly evolving.
Another benefit is that there are several manufacturers owning foundries including NXP-semiconductors, ST Microelectronics, Samsung, Infineon and Athena. Moreover, there are numerous independent facilities used
by different vendors to fabricate their ICs~\cite{www-foundry1,www-foundry2,www-foundry3,www-foundry4}, which also contributes to the necessary diversity for the quorums.

\medskip \noindent{\bf{Code \& Parameter Provisioning.}} Crucial component of the security of \name. If the code on all the ICs, or the cryptographic parameters contain errors, then the security of the system is not guaranteed. We propose two strategies to ensure secure provisioning. First, we may assume that provisioning occurs at the factory. This leverages our assumption that some of the factories are honest, and ensures that an adversary would have to compromise all fabrication facilities to extract any secrets. The second strategy is to assume the existence of a trusted off-line provisioning facility that is not under the control of an adversary. This facility needs only to guarantee high-integrity, as no secrets are involved in the provisioning step (secrets are generated within the ICs as part of protocols executed later).

\medskip \noindent{\bf{Formal Security Validations.}} Any off-premise use of ICs raises a question of trust. To address this problem, independent security validations (e.g., FIPS140-2, or Common Criteria) have been introduced. These validations are performed by a third party, which verifies the claims of hardware vendors. However, these validations are a serious bottleneck when introducing new IC design. \name does not need to go through this process as it executes provably secure cryptographic schemes on already validated ICs. As a result, even if one of the ICs has passed an independent validation, the whole instance of the~\name carries this assurance. 

\medskip \noindent{\bf{Real-world Practicality.}}
\name provides a platform for generating legally binding digital signatures (eIDAS~\cite{www-eidas}) under the sole control of the user.  Moreover, due to the tamper-resistance properties of SmartCards, ICs can be also stored remotely thus making~\name a highly practical system that is able to support some non-typical real-world use-cases. For instance, \name can perform code-signing of mobile applications for apps stores (e.g., AppStore or Google Play), by sharing the signing keys between developers' laptops, managers' computers, and securely stored ICs providing protection against internal enterprise/company attackers without compromising developers' control over the signing process. Finally, another practical feature of \name is that each party maintains its own independent audit logs, thus ensuring non-repudiation.

\section{Related Work}\label{sec:related}
This section examines existing literature on hardware trojans 
and countermeasures, outlines relevant fault-tolerant designs and architectures, and
discusses how \name compares to prior works.


\medskip \noindent{\bf{Hardware Trojans \& Countermeasures.}}
To better understand the extend of the Hardware Trojans threat, numerous attacks and exploitation techniques have been proposed in the literature in the last decade.
For instance, the authors in \cite{kumar2014parametric} design two hardware trojans and 
implement a proof-of-concept attack against the PRINCE cipher~\cite{borghoff2012prince}.
The novelty of their attacks is that they use dopant polarity changes (first introduced in~\cite{becker2013stealthy}),
to create a hard-to-detect fault-injection attack backdoor. 
Moreover, \cite{pellegrini2010fault} also introduces a hardware trojan attacking RSA applications. 
In this attack, the adversary is able to use power supply fluctuations to trigger the trojan, which then leaks bits of the key through a signature.
Another very hard to detect class of trojans (inserted by fabrication-time attackers) was introduced by K Yang et al. in~\cite{yang2016a2}. Such trojans leverage analog circuits and require only a single logic gate to launch their attack (e.g., switch the CPU's mode). Apart from these, detection evasion and stealthy triggering 
techniques have been proposed in~\cite{DBLP:conf/dft/WangMKNB12,wang2014hardware,wang2011sequential,DBLP:conf/cases/KutznerPS13,DBLP:journals/pieee/BhuniaHBN14}. 

As discussed in Section~\ref{sec:intro}, malicious components carrying errors
have been also observed in commercial and military hardware~\cite{DBLP:conf/ches/SkorobogatovW12, mitra2015trojan, jin2010hardware, gallagher2014photos,skorobogatov2012hardware, adee2008hunt, markoff2009old, shrimpton2015provable}, while a subset 
of those incidents also involved misbehaving cryptographic hardware~\cite{www-freebsd, www-rt, www-ars, www-schneier}. 
In all these cases, the errors were eventually attributed to honest design or fabrication mistakes, but the systems were left vulnerable to attacks regardless.
For instance, one popular and well-studied example of attacks against weak cryptographic hardware is~\cite{bernstein2013factoring}.
In this work, Bernstein et al.\ study the random number generators used in smartcards and found various malfunctioning pieces, that allowed them to 
break 184 public keys using in ``Citizen Digital Certificates'' by Taiwanese citizens.

To address the aforementioned threats, different approaches have been proposed. The most common ones attempt to either detect malicious circuitry, or prevent insertions. In particular, detection techniques aim to determine whether any HTs exist in a given circuit and feature a wide range of methods such as side-channel analysis  \cite{DBLP:journals/tvlsi/WeiP14, 5778966, soll2014based, agrawal2007trojan},
logic testing~\cite{DBLP:conf/ches/ChakrabortyWPPB09},
and trust verification~\cite{rajendran2017logic, DBLP:conf/host/BaoXS15, DBLP:conf/ccs/WaksmanSS13, DBLP:journals/tcad/ZhangYWLX15}.
On the other hand, prevention techniques aim to either impede the introduction of HTs, or 
make HT easier to detect, such approaches are 
Split manufacturing~\cite{7835561, DBLP:conf/dac/WangCHR16, rajendran2013split} which tries to minimize the circuit/design exposure to the adversary,
logic obfuscation~\cite{chakraborty2009security}
and runtime monitoring~\cite{DBLP:conf/sp/HicksFKMS10, waksman2010tamper}

Moreover, there is also a smaller body of work which attempts to tackle the
even harder problem of inferring additional information about the HT such as its triggers, payload, and exact location~\cite{DBLP:journals/tvlsi/WeiP14,5778966}. Other works considered verifiable computation architecures (such as \cite{WHGSW16}), which provide guarantees for the correctness of the computation on untrusted platforms. However, they come with a computation overhead and do not guarantee secure handling of secrets or protection from side-channel attacks. On top of the above, previous works~\cite{dziembowski2016private, cryptoeprint:2016:527}  have also theoretically discussed using multi-party computation protocols to distribute trust between untrusted manufacturers during the fabrication process.

\name follows an alternative approach that leverages a diverse set of untrusted ICs to minimize the risk of compromises by distributing trust between them. For this reason, all the above countermeasures remain applicable and would even increase the security of the final system. In other words, our proposed approach is not an alternative to existing techniques, as it provides a way to combine ICs fabricated by different manufacturers, in various locations and featuring a wide-range of protection techniques in one system.

\noindent{\bf{Fault-Tolerant Systems.}}
Component-redundancy and component-diversification are both key concepts used in N-variant systems that 
aim to achieve high fault-tolerance~\cite{chen1978n}.
An example of such a system is the Triple-Triple Redundant 777 Primary Flight Computer \cite{yeh1996triple, yeh1998design}, that
replicates the computations in three processors and then performs a majority voting to determine the final result.
The applications of N-variance in security scenarios has been studied in only few works aiming to protect
systems against software attacks. In particular, \cite{DBLP:conf/uss/CoxE06} introduces a method to generate
randomized system copies, that will have disjoint exploitation sets, thus achieving memory safety.
On the other hand, \cite{DBLP:conf/dac/AlkabaniK08} proposes a N-variant method for IC diversification, 
aiming again to achieve disjoint exploitation sets. However, this method is not effective against fabrication-time attacks and protects only against (potentially exploitable) unintentional errors. Finally, heterogeneous architectures with COTS components have been 
also proposed in \cite{beaumont2012safer, beaumont2013hardware}. However, the
practicality of these works is very limited as the computations are simply replicated between the different components, thus not protecting against confidentiality attacks.

\section{Conclusion}\label{sec:concl}
High-assurance cryptographic hardware, such as hardware security modules,
is used in production under the assumption that its individual components are
secure and free from errors. However, even though post-fabrication testing 
is capable of detecting
defective ICs with high-accuracy, there are certain error classes that
are hard to uncover (e.g., RNG defects). Unfortunately, these errors
are also detrimental to the security of high-assurance hardware
such as hardware security modules. 
Moreover, hardware trojans and malicious circuitry 
have been extensively studied, and there is an abundance of 
designs, mitigation techniques and countermeasure evasion methods. 
This line of work assumes that not all errors can be detected and that due to the arms race,
between trojan horses and mitigation techniques, countermeasures will never be $100$\%
effective against all possible threats.

To resolve this, we introduce~\name, which brings the adversary into the 
unfavorable position of having to compromise 100\% of the hardware components
to gain any advantage.
By employing threshold schemes
and a redundancy-based architecture, \name is able to
distribute both secrets and cryptographic computations among multiple, diverse integrated circuits.
Consequently, an adversary aiming to breach the confidentiality or the integrity of the system,
must be able to compromise all the ICs. This is not a 
trivial task when the ICs 
are manufactured from different vendors, in different premises.
To evaluate \name, we build a custom board featuring
120 Smart Cards controlled by an Artix-7 FPGA.
The maximum throughput for distributed 
decryption is $315$ops/sec, while for signing is $77$ops/sec. Both come with
an overhead of less than $<1\%$ compared to a typical single-point of failure system.
All in all, our results show that \name is highly scalable, and provides strong
guarantees for the security of the system thus making it suitable for production.


\section{Acknowledgements}
This work was supported by the European Commission through the H2020-DS-2014-653497 PANORAMIX project and the European Research Council via the European Union's Seventh Framework Programme (FP/2007-2013) / ERC Grant Agreement n. 307937, and the Czech Science Foundation under project GA16-08565S.


\bibliographystyle{ACM-Reference-Format}
\bibliography{hwmpc}
\newpage
\appendix

\section{Multi-signature Scheme Proof}\label{sec:proof}
In this Appendix we discuss the security of our construction introduced in Section~\ref{sec:sign}. Before moving to the proof of the Theorem~\ref{thm:sign} we recall the definition of the one-more discrete logarithm problem \cite{BNPS03} and security definitions for multi-signatures schemes.

\begin{definition}[$N$-DL]
A group generator  $\GG(1^\lambda)$ is a probabilistic polynomial time algorithm that on input a security parameter $\lambda$ returns a pair $(p,G)$ where $p$ is a $\lambda$-bit prime and $G$ is a random generator of a cyclic group $\Gr$ of order $p$.

An algorithm $\A$ to solve the $N$-DL problem is a probabilistic polynomial time algorithm which receives as input an instance $(p,G)\gets \GG(1^\lambda)$ and can access two oracle $O_{ch}, O_{dlog}$. Upon request, the former returns a random group element in $\Gr$. The latter gets as input a group element $C$ and returns its discrete logarithm with respect to the generator $G$, i.e. such that $C=x\cdot G$. Let $C_1,C_2,\ldots, C_N$ be the challenges returned by $O_{ch}$. We say that adversary $\A$ wins if he returns $c_1,\ldots,c_N\in \Z_p$  satisfying $C_i=c_i\cdot G$ by using a number of queries to $O_{dlog}$ \emph{strictly} smaller than $N$. 
\end{definition}

Next, we recall definitions of multi-signatures of \cite{CCS:BelNev06}, adapted to our settings.

\begin{definition}[Multi-Signature]
A multi-signature scheme is a tuple $(KeyGen, Sign, Verify)$ of algorithms. 
\begin{itemize}
\item $KeyGen$: This is an interactive protocol between $n$ parties $IC_i$ to jointly compute a common shared verification key $Y$ and $n$ individual signing keys $x_i$ associated with $Y$.
\item $Sign$: This is an interactive protocol between the $n$ ICs that given input a common message $m$ and their individual secret keys $x_i$ allows to compute a shared signature $\Sigma$ on $m$.
\item $Verify$: This is a deterministic algorithm that given in input the common verification key $Y$, a signature $\Sigma$ and a message $m$ returns 1 if the signature is valid and 0 otherwise. 
\end{itemize}
\end{definition}

In the construction presented in Section~\ref{sec:sign} the key generation protocol is handled by the DKPG; $Sign$ is described by Algorithm~\ref{alg:SigShare} and Figures~\ref{fig:SIGN_caching} and \ref{fig:SIGN_v2}; $Verify$ consists of the verification algorithm of standard Schnorr signatures on the aggregated signature using the shared verification key $Y$.

\begin{definition}[Security Game]
We consider and adversary $\F$ attempting to forge a multi-signature. The attack is modelled as a game in three phases.
\begin{itemize}
\item Setup: We assume the key generation protocol among $n$ parties to be successfully executed and returning a shared public key $Y$ and a set of $n$ secret keys $x_i$.  
\item Attack: We assume the forger $\F$ to corrupt $n-1$ ICs and learn their own individual secret keys $x_i$. Without loss of generality, we assume user $IC_1$ to be the only honest user in the system. The forger $\F$ interacts as the host over $q_S$  interactive signing sessions with $IC_1$ and arbitrarily deciding on the messages to be signed. Let $Q$ to be the set of messages $m$ used in the interactive signing sessions.
\item Forgery: At the end of its execution, the forger $\F$ returns a signature $\Sigma$ on a message $m$ which was not used on a signing session with $IC_1$, namely $m\notin Q$. The forger $\F$ wins the game if $Verify(Y,m,\Sigma)=1$.
\end{itemize}

We model the security of the scheme in the random oracle model~\cite{fiat1986prove,BR93}. This means that we assume the hash function to behave as an ideal random function. This is modelled by giving to the adversary $\F$ access to an oracle $O_{Hash}$ returning random values in a range $\{0,1\}^\lambda$.

The advantage $adv_{ms}(\F)$ of algorithm $\F$ in forging a multi-signature is defined to be the probability that $\F$ wins the above game, where the probability is taken over the random coins of $\F$, $IC_1$, DKGP and the random oracle.
We say that $\F$ $(q_S, q_H,\epsilon)$-breaks the multi-signature scheme if it participates in $q_S$ signing sessions with $IC_1$, makes at most $q_H$ queries to $O_{Hash}$ and the advantage $adv_{ms}(\F)\geq \epsilon$.
\end{definition}

%
%
%
%
%
%

We now restate and prove Theorem~\ref{thm:sign}.
 \secproof*
%
\begin{proof}
We start by describing the idea behind the reduction. Assume for a moment that a forger $\F$ is able to produce two valid signatures $(\sigma,\epsilon)$ and $(\sigma',\epsilon')$ on the same message, i.e.
\begin{equation}\label{eq:condition1}\epsilon=Hash(\sigma\cdot G +\epsilon \cdot Y||H(m)||j) \qquad \epsilon'=Hash(\sigma'\cdot G +\epsilon' \cdot Y||m||j)\end{equation}
 and such that
\begin{equation}\label{eq:condition2}R=\sigma\cdot G +\epsilon\qquad R=\sigma'\cdot G +\epsilon' \cdot Y\end{equation}
Dividing the two equations we obtain $$(\sigma-\sigma')\cdot G=(\epsilon'-\epsilon)\cdot Y$$
then we get the discrete logarithm of $Y$ with respect to $G$ by computing $(\sigma-\sigma')*(\epsilon'-\epsilon)^{-1}\mod p$.

Given a  forger $\F$ to output forged signatures, we construct an adversary $\A$ for solving the $(q_S+1)$-$DL$ problem. In the process, $\A$ has to simulate signatures as produced by $IC_1$ and answer to the random oracle queries made by $\F$ during the attack, as it happen in the security game for multi-signatures.
In case $\F$ succeeds in forging a first signature, then adversary $\A$ rewinds $\F$ and replay him reusing the same coin tosses used in the first execution. However in the second run of $\F$, the adversary replaces the answer of the random oracle query corresponding to the forgery produced in the first execution with a fresh random string in the range of the hash function. By applying a forking lemma type of argument \cite{PS00} one can then argue that with good probability the replay of $\F$ will return a new forgery which has the same $R$ as in the first forged signature but different $(\sigma',\epsilon')$.
Once adversary $\A$ obtains two signatures of this kind, he will be able to compute the discrete logarithm of $Y$ as shown above\footnote{The adversary described in the proof actually has to computes the discrete logarithm of $Y_1$, the public key of $IC_1$. This is trivial to do once the adversary $\A$ has obtained the discrete logarithm of $Y$, the shared verification key.}.

The description of $\A$ is given in Algorithm~\ref{alg:adversary}. Throughout the execution, adversary $\A$ keeps track of the random oracle queries made by $\F$ using a list $L[\cdot]$. Without loss of generality we assume the adversary $\F$ checks all the signature queries made to her oracle, as well as the signature he attempts to forge. 

We can summarise the interaction of adversary $\A$ with $\F$ in the following phases. 

\begin{algorithm}
\SetAlgoLined

\SetKwInOut{Input}{Input}
\SetKwInOut{Output}{Output}
\Input{$(p,G)$.}
\Output{$q_S+1$ discrete logarithms using $q_S$ calls to $O_{dlog}$}
\BlankLine
\SetKwProg{setup}{Setup Phase}{}{}
\setup{}{
$ctr=0, j=1$\\
Set $L[\cdot]=\emptyset, J[\cdot]=\emptyset$ \\
Set random string $\rho$ for adversary $\F$.\\
Pick $\pi_1,\pi_2,\ldots, \pi_{q_H}\gets \{0,1\}^\lambda $\\
$Y_1 \gets O_{ch}$\\
\For{$i =2$ to $n$}{
        $x_i\gets \Z_p$\\
	$Y_i={x_i}\cdot G \gets O_{ch}$\\
	}
$Y=\sum_{i=1}^n Y_i$\\
}
\BlankLine
\BlankLine
\SetKwProg{start}{Start}{}{}
\SetKwProg{run}{Run}{}{}
\SetKwFunction{adv}{$\F$}
\start{\adv{$Y,(Y_1,\ldots, Y_n),(x_2,\ldots, x_n);\rho$}}{}

\BlankLine
\BlankLine
\SetKwProg{cache}{Caching Phase:}{}{}
\cache{$Init \gets \F$}{
$R_{1,1},R_{1,2},\ldots, R_{1,q_S}\gets O_{ch}$\\
\run{\adv{$R_{1,1},R_{1,2},\ldots, R_{1,q_S}$}}{} 
}
\BlankLine
\BlankLine
\SetKwProg{rom}{Random Oracle \text{Queries}:}{}{}
\rom{$(Hash,M)\gets \F$}{
\If{$L[M]=\bot$}{$ctr +=1$\\
	$L[M]=ctr$\\}
$k=L[M]$\\
\run{\adv{$\pi_k$}}{} 
}

%
\BlankLine
\BlankLine
\SetKwProg{sign}{Signing \text{Queries}:}{}{}
\sign{$(Sign,(R,H(m),j'))\gets \F$}{
\If{$ j'\neq j$}{\Return{$\bot$ to $\F$}
}
\Else{$j+=1$}
\If{$L[R||H(m)||j']=\bot$}
	{$ctr +=1$\\
	$L[R||H(m)||j']=ctr$
	}
$k=L[[R||H(m)||j']]$\\
$J[j']=k$\\
 $\epsilon_{j'}=\pi_{k}$\\
$\sigma_{j'}\gets O_{dlog}(R_{1,j'}-h\cdot Y)$\\ 
\run{\adv{$(\sigma_{j'},\epsilon_{j'})$}}{} 
}

\BlankLine

$(Forgery, (m,(\sigma,\pi_i)))\gets \F$\\
\If{$Verify(Y,m,(\sigma,\pi_i))=0$}{\Return $\bot$}
\BlankLine
\BlankLine
\SetKwProg{replay}{Replay}{}{}
\replay{\adv}{} 
\BlankLine
\BlankLine

$(Forgery, (m,(\sigma',\pi_i')))\gets \F$\\
\If{$Verify(Y,m,(\sigma',\pi_i'))=0$}{\Return $\bot$}
$x_1=(\sigma-\sigma')*({\pi_i}'-\pi_i)^{-1} -\sum_{j=2}^n x_j\mod p$\\
\For{$j =1$ to $q_S$}{
	$k=J[j]$\\
        $r_{1,j}=s_j+x_1*\pi_k \mod p$\\
	}
\Return{$(x_1,r_{1,1},\ldots,r_{1,q_S})$}
\caption{Adversary $\A_\F^{O_{ch},O_{dlog}}$ against $(q_S+1)$-DL}\label{alg:adversary}
\end{algorithm}

\begin{itemize}
\item Setup Phase: $\A$ initialises an empty list $L$. She picks a random string $\rho$ for the randomness used by adversary $\F$ as well as $q_H$ random values $\pi_i$ from the range of the hash function, hereby set to be $\{0,1\}^\lambda$. The adversary then simulates the multi-signature key generation process: she generates pairs of public and secret keys for the corrupted cards $IC_2,\ldots, IC_n$, as honest cards would do, and generates the public key of $IC_1$ by querying his own challenge oracle $Y_1\gets O_{ch}$. Finally, she generates the shared public key $Y$ as in the key generation protocol (Algorithm~\ref{alg:PointAgg}). Then she starts adversary $\F$  on inputs the shared verification key $Y$, the list of verification key-shares and the $n-1$ signing key corresponding to the corrupted ICs. 
\item Caching Phase: the adversary $\F$, acting as the hosts, initiates the signature process by sending $Init$ to $IC_1$. In this phase the adversary emulates the caching phase of $IC_1$ by making $q_S$ queris to $O_{ch}$ and returning the list of group elements to $\F$.
\item Random Oracle Queries: whenever $\F$ makes a new query $(Hash, M)$ to its random oracle, adversary $\A$ answers it by picking the next unused random string in $\{\pi_1,\pi_2,\ldots, \pi_{q_H}\}$. She keeps track of $\F$ requests to answer consistently.
\item Signing Queries: in this phase the forger submits signature queries as the host in the system. The adversary $\A$ hashes the message, consistently with the list of random oracle queries and uses her oracle $O_{dlog}$ to compute the requested signature.
\item Replay: $\F$ eventually attempts to output a forged multi-signature $(\sigma,\epsilon)$. If the signatures verifies, $\A$ rewinds adversary $\F$ to the beginning and replay it on input the same random coins $\rho$ and same inputs and reusing the same response in the caching phase. $\A$ answers the random oracle and signing queries as in the first execution apart form the random oracle query which generated $\epsilon$. $\A$ answers this query by picking a random string $\pi_i'\in\{0,1\}^\lambda$. The forger attempts to produce a new signature $(\sigma',\epsilon')$. If conditions  \eqref{eq:condition1} and \eqref{eq:condition2} are met, then $\A$ computes the discrete logarithm of $Y$, and thus the discrete logarithm of $Y_1$. Given the latter and the responses of $O_{dlog}$ she can also compute the discrete logarithm of the $R_{1,j}$, the outputs of $O_{ch}$.
\end{itemize}

In order to succeed in her game, adversary $\A$ has to simulate the interaction between $\F$ and $IC_1$ as described in the security game of multi-signatures.  The above adversary perfectly simulates the multi-signature key generation phase, the caching phase of $IC_1$ and the random oracle queries of $\F$. Extra care needs to be taken for signature queries simulating the interaction of $IC_1$ and $\F$.

\medskip \noindent{\bf{Simulation of Signature Queries.}}
Note that differently from standard Schnorr signatures, we cannot simulate a signature by programming the random oracle. This is because the caching phase of $IC_1$ fixes the randomness used in the signatures before $\A$ knows the messages corresponding to them, preventing him to program the random oracle accordingly. Thus, in order to simulate $IC_1$'s signatures $\A$ takes advantage of her $O_{dlog}$ oracle. As $\A$ has to make fewer queries to $O_{dlog}$ than to $O_{ch}$, she is allowed one call to the discrete logarithm oracle for each signature in the first round. The problem arises when $\A$ replays the forger $\F$. In this execution we want $\F$ to behave exactly as in the first round up to the point he queries (a second time) the random oracle on the forged message. This means that in the caching phase $\A$ is forced to reuse the same random group elements as in the first execution. Since $\A$ cannot make new queries to the challenge oracle, she cannot make any more queries to the $O_{dlog}$ during the second execution of $\F$. Therefore, in the second execution, $\A$ has to simulate signatures (on potentially different messages) without the help of her oracle. Assume for a moment that the index of hash queries used for the creation of signatures is the same in both executions. In this case, the response of the random oracle on the corresponding hash queries will be the same, even if the message has changed during the two execution. Since the the output of the hash is the same in both runs, $\A$ is able to answer the signature queries by simply recycling the signature produced in the first execution.

\medskip \noindent{\bf{Computation of Discrete Logarithm.}} 
Under the condition that $\A$ simulates $\F$'s environment, the above adversary succeeds to extract the discrete logarithm of $Y_1$ as long as the index of the hash query associated to the forged signatures is the same in both runs. In this case conditions \eqref{eq:condition1} and \eqref{eq:condition2} are met with overwhelming probability and $\A$ can proceed to compute discrete logarithm as illustrated above. 

From the above observations we can summarise two conditions which make adversary $\A$ to succeeds in her game:
\begin{enumerate}
\item The index $i$ of the random oracle query corresponding to the forged signature is the same in both executions of $\F$. 
\item The set $J$ of indexes of random oracle queries corresponding to signature queries is the same in both execution of $\F$.
\end{enumerate}

By assuming that $\F$ forges signatures with non-negligible probability $\epsilon$, the following lemma shows the above conditions holds with non-negligible probability. This concludes the proof as it gives a bound $\delta$, the probability of success of $\A$. 
\end{proof}
%
%
The following lemma is an adaptation of 
the generalized forking lemma of \cite{CCS:BelNev06} and Lemma 5 of \cite{NKDM03}.   

\begin{lemma}
Let $\F$ be an adversary producing a multi-signature forgery with probability $\epsilon$. Assuming that $\F$ makes at most $q_{H}$ hash queries and interacts in $q_S=O(1)$ signatures, then by replaying $\F$, $\A $ succeeds with probability $\delta \geq \frac{\epsilon^2}{q_{H}^{q_S+1}}-negl(\lambda)$.
\end{lemma}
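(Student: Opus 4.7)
The plan is to construct a reduction $\A$ that, given a $(q_S+1)$-DL instance with access to $O_{ch}$ and $O_{dlog}$, simulates the multi-signature security game for the forger $\F$ and exploits two related forgeries (via rewinding) to extract discrete logarithms. First, $\A$ embeds her first challenge $Y_1 \gets O_{ch}$ as the public key share of the single honest IC; she generates $x_2,\ldots,x_n$ for the corrupted ICs herself so that the aggregate $Y = \sum_i Y_i$ is well-formed and $\F$'s view of the DKPG output is identical to a real run (the rogue-key attack is ruled out by the commit-then-reveal structure of DKPG, as argued in Section~\ref{sec:keygen}). When $\F$ initiates caching, $\A$ draws the $q_S$ elements $R_{1,1},\ldots,R_{1,q_S}$ from $O_{ch}$ and forwards them; random-oracle queries are answered with a pre-sampled list $\pi_1,\ldots,\pi_{q_H}$ so that hash outputs can be programmed consistently.

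Next I would handle the signature queries of the first execution by exploiting $O_{dlog}$ in the only available way: on each signing query on message $m$ with index $j$, after $\F$ reveals the aggregated $R_j$, $\A$ associates the random oracle answer $\epsilon_j = \pi_k$ to the tuple $(R_j\|H(m)\|j)$ and queries $O_{dlog}$ on the point $R_{1,j} - \epsilon_j \cdot Y_1$ to obtain a value $\sigma_j$ satisfying the verification equation. This costs exactly one $O_{dlog}$ call per signing query, consuming the full $q_S$-budget allowed to $\A$. Once $\F$ eventually outputs a forgery $(\sigma,\epsilon)$ on a fresh message $m$ with $\epsilon = \pi_i$ for some index $i$, $\A$ rewinds $\F$ to the moment of the $i$-th hash query, replays it on identical random tape and identical $O_{ch}$ answers (so the cached $R_{1,j}$ are preserved), but replaces $\pi_i$ with a fresh $\pi_i'$. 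If $\F$ again forges on the same $m$, by verification both forgeries share the same $R$, and $\A$ recovers the aggregate secret via $x = (\sigma-\sigma')(\pi_i'-\pi_i)^{-1} \bmod p$, from which $x_1 = x - \sum_{j\geq 2} x_j$ is the discrete log of $Y_1$; combining $x_1$ with the recorded $\sigma_j$ values then yields the discrete logs $r_{1,j} = \sigma_j + x_1 \cdot \epsilon_j$ of each $R_{1,j}$, producing the required $q_S+1$ discrete logarithms with only $q_S$ calls to $O_{dlog}$.

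The main obstacle, as flagged in the paper's discussion, is answering signing queries in the replayed run: $\A$'s $O_{dlog}$ budget is spent, so she can only recycle the $\sigma_j$ from the first execution. This recycling is sound precisely when, in both runs, each signing query $j$ is linked to the same hash-table index $k = J[j]$, so that the hash output $\epsilon_j$ is identical across runs and the recycled $\sigma_j$ remains a valid signature share. This is exactly the event on which the forking-lemma bookkeeping focuses, and it drives the $q_H^{q_S}$ factor in the denominator. I would formalize this with a generalized forking lemma in the style of Bellare--Neven~\cite{CCS:BelNev06} and Nicolosi et al.~\cite{NKDM03}: conditioning on $\F$ succeeding in the first run (probability $\epsilon$), the probability that the replay also succeeds with all $q_S$ signing-query indices and the forgery index aligning is at least $\epsilon / q_H^{q_S+1}$, by independence of the uniformly sampled $\pi_i$ values and the $O(1)$ bound on $q_S$. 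Averaging the success of the first run gives $\delta \geq \epsilon^2/q_H^{q_S+1}$, and the $\mathrm{negl}(\lambda)$ slack absorbs the events $\pi_i = \pi_i'$, $\F$ forging on an unqueried hash, and collisions in $L[\cdot]$, each of which is negligible in $\lambda$. This yields exactly the claimed bound.
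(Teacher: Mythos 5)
Your proposal follows essentially the same route as the paper: embed the $(q_S+1)$-DL challenges as $Y_1$ and the cached $R_{1,j}$, answer signing queries with $O_{dlog}$ in the first run and recycle them in the replay, and then bound the double-success probability by a generalized forking argument over the target indexes (the forgery's hash index $i$ and the set $J$ of signing-query hash indexes), which is exactly where the paper's $q_H^{q_S+1}$ denominator and the $negl(\lambda)$ slack come from. The only difference is cosmetic: where you invoke ``independence'' of the $\pi_i$ and defer to the Bellare--Neven/Nicolosi-style forking lemma, the paper spells out the second-moment computation explicitly via the row-density matrices $M_{i,J}$ and two applications of the Cauchy--Schwarz inequality, but the underlying argument is the same.
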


\begin{proof}
Let $i \in [q_{H}]$ be the index of the random oracle query associated to the forgery returned by $\F$ and let $J\subseteq [q_{H}],|J|\leq q_S$ be the set of indexes of oracle queries associated with the signatures queries made by $\F$. Call $i$ and $J$ the \emph{target} indexes. Note that $J$ and $\{i\}$ are disjoint as a valid forgery cannot be on a message queried on the signature oracle.

For any fixed $i,J$ let  $\epsilon_{i,J}$ be the probability that $\F$ produces a forgery on target indexes $i,J$. The probability of $\F$ forging a signature is given by $\epsilon=\sum_{i,J}\epsilon_{i,J}$.

Consider now all the random inputs given to $\F$. These includes the random coins $\rho$, his inputs, the response of the caching phase, the responses of the random oracle queries and the signature queries. We split these into: the $i$-th response to the random oracle query, $\pi_i$, and everything else, which we call $R$. Consider now a matrix $M_{i,J}$ with one column for every possible answer to $\pi_i$ and one row for any possible string $R$. The entries of the matrix are 1, in case $R$ and $\pi_i$ will make $\F$ produce a forgery for target indexes $i,J$, and 0 otherwise. For any $i,J$, let $\epsilon_{i,J,R}$ be the probability of the adversary $\F$ to produce a forgery given randomness $R$. This corresponds to the density of the row $R$ in the matrix $M_{i,J}$. Similarly, $\epsilon_{i,J}$ corresponds to the density of the matrix $M_{i,J}$, namely
$$\epsilon_{i,J}=\frac{1}{|R|}\sum_{R}\epsilon_{i,J,R} $$  

In a similar fashion we define $\delta_{i,J}$ to be the probability of producing two forgeries on the same target indexes $i,J$ while replaying the adversary on same randomness and replacing $\pi_i$ with a random $\pi_i'\gets \{0,1\}^\lambda$. Similarly, we also define $\delta_{i,J,R}$ for which we have
$$\delta_{i,J}=\frac{1}{R}\sum_{R}\delta_{i,J,R} $$
 
 We now relate probability $\delta$ of adversary $\A$ to $\F$. The probability $\delta_{i,J,R}$ corresponds to the event of sampling randomness $R$ for the adversary $\F$, hitting a first 1 in the row $R$ of $M_{i,J}$ and then probe another random column in the same row and hitting another 1. Since the two runs are independent, the probability of succeeding in the replay attack is $$\delta_{i,J,R}=\epsilon_{i,J,R}*\left(\epsilon_{i,J,R}-\frac{1}{2^\lambda}\right)$$
Replacing the above in the expression of $\delta_{i,J}$ and applying the Cauchy-Schwartz inequality give the following
\begin{align*}
\delta_{i,J}&=\frac{1}{|R|}\sum_{R}\delta_{i,J,R}\\
&=\frac{1}{|R|}\left(\sum_{R}\epsilon_{i,J,R}^2-\frac{\epsilon_{i,J,R}}{2^\lambda}\right)\\
&\geq \left(\frac{1}{|R|}\sum_{R}\epsilon_{i,J,R}\right)^2-\frac{\epsilon_{i,J}}{2^\lambda}\\
&=\epsilon_{i,J}^2 -\frac{\epsilon_{i,J}}{2^\lambda} 
\end{align*}
\vfill\eject Given the above we get the following bound on $\delta$
\begin{align*}
\delta&=\sum_{i,J}\delta_{i,J}\geq \sum_{i,J}\left(\epsilon_{i,J}^2 -\frac{\epsilon_{i,J}}{2^\lambda}\right) \\
&=\sum_{i,J}\epsilon_{i,J}^2 -\frac{\epsilon}{2^\lambda}\\
&\geq \frac{1}{q_H^{q_S+1}}\left(\sum_{i,J}\epsilon_{i,J}\right)^2-\frac{\epsilon}{2^\lambda}\\
&=\frac{\epsilon^2}{q_H^{q_S+1}}-\frac{\epsilon}{2^\lambda} 
\end{align*}
 where the last inequality is obtained by applying again the Cauchy-Schwartz inequality.
\end{proof}

\section{Prototype Extensions}\label{sec:prototype}
Our \name prototype can also be extended to use multiple ICs boards, if higher throughput is needed. Figure~\ref{fig:fullboard} depicts an instantiation with 240 ICs: two boards holding 120 ICs each (60 ICs on each side) and an Artix-7 FPGA to facilitate the inter-IC communication within each board.

\begin{figure}[h]
\centering
\includegraphics[width=\columnwidth]{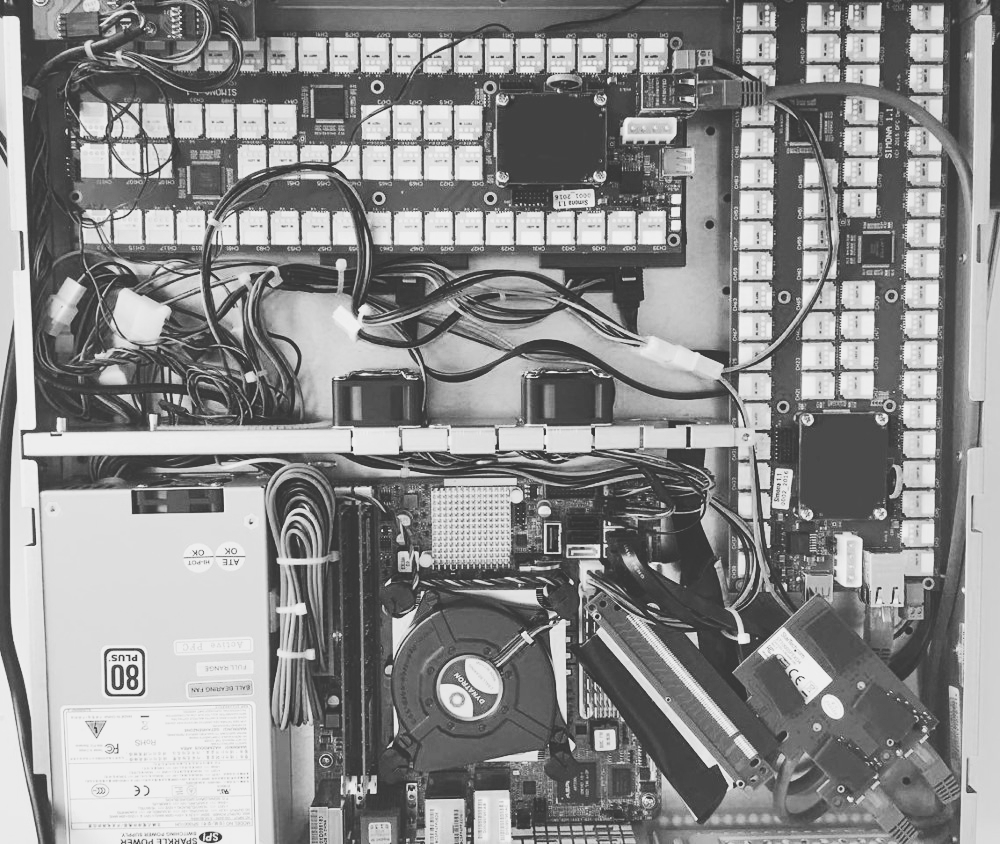}
\caption{\name's prototype with 240 JavaCards fitted into an 1U rack case.}
\label{fig:fullboard}
\end{figure}

\end{document}